\newtheorem{remark}{Remark}
\newtheorem{proposition}{Proposition}
\acrodef{ml}[ML]{machine learning}
\acrodef{drl}[DRL]{deep reinforcement learning}
\acrodef{fl}[FL]{federated learning}
\acrodef{hfl}[HFL]{hierarchical federated learning}
\acrodef{rnn}[RNN]{recurrent neural network}
\acrodef{fcn}[FCN]{fully connected neural network}
\acrodef{lstm}[LSTM]{long short-term memory}
\acrodef{gru}[GRU]{gated recurrent unit}
\acrodef{fedavg}[FedAvg]{federated averaging}
\acrodef{ue}[UE]{user equipment}
\acrodef{bs}[BS]{base station}
\acrodef{isp}[ISP]{(wireless) internet service provider}
\acrodef{csp}[CSP]{content service provider}
\acrodef{es}[ES]{edge server}
\acrodef{sgd}[SGD]{stochastic gradient descent}
\acrodef{chr}[CHR]{cache hit ratio}
\acrodef{wgan}[WGAN]{Wasserstein generative adversarial network}
\acrodef{mdp}[MDP]{Markov decision process}
\acrodef{milp}[MILP]{mixed-interger linear programming}
\acrodef{ilp}[ILP]{interger linear programming}
\acrodef{csi}[CSI]{channel state information}
\acrodef{uav}[UAV]{unmanned aerial vehicles}
\acrodef{lru}[LRU]{least recently used}
\acrodef{lfu}[LFU]{least frequently used}
\acrodef{rl}[RL]{reinforcement learning}
\acrodef{cs}[CS]{cloud server}
\newcommand{\bblue}{\textcolor{black}}
\begin{document}
%
\title{
Revenue Optimization in Wireless Video Caching Networks: A Privacy-Preserving Two-Stage Solution}

\author{Yijing Zhang, Md Ferdous Pervej, \textit{Member, IEEE}, and Andreas F. Molisch, \textit{Fellow, IEEE}
\thanks{A preliminary version of this paper is currently under review at the 2025 IEEE Military Communications Conference (MILCOM) \cite{zhang2025revenue}.}
\thanks{This work was funded by NSF under the NSF-IITP project $2152646$.}
\thanks{Y. Zhang and A. F. Molisch are with the Ming Hsieh Department of Electrical and Computer Engineering, University of Southern California, Los Angeles, CA 90089 USA (e-mail: yijingz3@usc.edu; molisch@usc.edu).}
\thanks{M. F. Pervej was with the Ming Hsieh Department of Electrical and Computer Engineering, University of Southern California, Los Angeles, CA 90089 USA. He is now with the Department of Electrical and Computer Engineering, Utah State University, Logan, UT 84322 USA (e-mail: ferdous.pervej@usu.edu).}}

\maketitle

\begin{abstract}

Video caching can significantly improve delivery efficiency and enhance quality of video streaming, which constitutes the majority of wireless communication traffic. Due to limited cache size, 
caching strategies must be designed to adapt to and dynamic user demand in order to maximize system revenue.
The system revenue depends on the benefits of delivering the requested videos and costs for (a) transporting the files to the users and (b) cache replacement. Since the cache content at any point in time impacts the replacement costs in the future, demand predictions over multiple cache placement slots become an important prerequisite for efficient cache planning.
Motivated by this, we introduce a novel two-stage privacy-preserving solution for revenue optimization in wireless video caching networks. 
First, we train a \emph{Transformer} using \bblue{privacy-preserving} \ac{fl} to predict multi-slot future demands. 
Given that prediction results are never entirely accurate, especially for longer horizons, we further combine global content popularity with per-user prediction results to estimate the content demand distribution. 
Then, in the second stage, we leverage these estimation results to find caching strategies that maximize the long-term system revenue. This latter problem takes on the form of a multi-stage knapsack problem, which we then transform to a integer linear program. 
Our extensive simulation results demonstrate that (i) our \ac{fl} solution delivers nearly identical performance to that of the ideal centralized solution and outperforms other existing caching methods, and (ii) our novel revenue \bblue{optimization approach} provides deeper system performance \bblue{insights} than traditional \ac{chr}-based \bblue{optimization} approaches.
\end{abstract}

\begin{IEEEkeywords}
Cache planning, federated learning, revenue optimization, video caching.
\end{IEEEkeywords}

\IEEEpeerreviewmaketitle

\acresetall
\section{Introduction}
\subsection{Background and Motivation} 

The growing demands for video streaming create increasing pressure on wireless networks \cite{PopcachingSurvey}, especially on backhaul links.
As repeated requests for popular content drive up traffic, ensuring low latency and stable service becomes more challenging. 
The backhaul load can be significantly reduced by edge caching, i.e., storing the most frequently requested files at the network edge (and/or the user's local devices) \cite{golrezaei2013femtocaching}.
However, the storage size of an edge server, which can be embedded in the \ac{bs} (or local devices), is typically much smaller than the total amount of library content. Thus, efficient caching necessitates knowledge of the future video demands.


Since efficient caching relies on predicting users' future requests, accurate prediction of future demands is one of the key requirements for cache design. However, it is challenging because a user's \emph{future} content requests can deviate significantly from both their own past usage patterns and from the global popularity. 
Since the classical caching methods (e.g., \ac{lru}, \ac{lfu}, etc.) focus on using past information to make cache decisions, they fail to capture the change in content popularity over time \cite{pervej2024resource}.
Therefore, an efficient caching solution requires accurate demand predictions and shall shuffle the cached content considering cost versus predicted benefits.
\Ac{ml} can greatly enhance the efficiency of cache design by capturing user-specific preferences and forecasting their future demands. 
In particular, \emph{foundation models} such as the Transformer \cite{attentionAllyouNeed} are a promising tool to accurately predict these demands.


In wireless video caching networks, users as well as commercial entities (e.g., \ac{isp} and \ac{csp}) want to preserve the privacy or confidentiality of the data. 
Users are reluctant to share their personal request information due to privacy concerns. 
In addition, when a user requests content from the \ac{csp} through their serving \ac{isp}, confidentiality issues arise between the \ac{csp} and \ac{isp}. 
Since these two entities are often business competitors, the \ac{csp} prefers to keep user-specific request details confidential. 
Similarly, the \ac{isp} is unwilling to disclose the location of user requests.
Given these constraints, prediction of demand at a \ac{bs} cannot rely on centrally aggregated knowledge by the \ac{isp} or the \ac{csp}, making centralized \ac{ml} 
algorithm unsuitable. 
Therefore, a privacy-preserving learning solution such as \ac{fl} is required for demand prediction that facilitates the caching strategy. 
Note that \ac{fl} enables distributed model training at the wireless edge, which also ensures that information remains protected between different parties while allowing effective demand prediction.


Based on the predicted content requests, an efficient cache placement policy needs to be developed. 
Given that the total number of available content files far exceeds the storage capacity at the edge, it is crucial to design an appropriate objective function to determine which content should be cached within the limited storage.
While the \ac{chr} has long been used as the target function for cache design, it only reflects the user satisfaction but ignores the cost of caching, and thus fails to provide a comprehensive evaluation. 
In practice, the objective function should balance two key trade-offs: (i) the long-term benefits of maintaining or updating the cached content, which helps reduce backhaul costs associated with retrieving requested files, and (ii) the expenses incurred in refreshing the cache.
Therefore, an efficient caching strategy should be guided by an objective function that incorporates these trade-offs from a revenue optimization perspective, ensuring both cost-effectiveness and improved system performance.

\subsection{Related Work}


\Ac{ml} is extensively used in the literature to capture the dynamic changes in content requests \bblue{and to predict future demands} \cite{MLEdgecachingPowerAllo, MLCodedcachingUnkonw, MLCloudAidedHierarchical,lekharu2024collaborativeRL, DRLjointcachingRoutingAI, DRLThreeTier}. 
These studies can be broadly categorized into (1) supervised learning and (2) \ac{rl}.
For supervised \ac{ml}, recent works \cite{MLEdgecachingPowerAllo, MLCodedcachingUnkonw, MLCloudAidedHierarchical} mainly used \ac{lstm} based model to predict future content demands. 
\Ac{drl} is also widely used \cite{lekharu2024collaborativeRL, DRLjointcachingRoutingAI, DRLThreeTier} to capture long-term popularity changes for caching policy design. 
However, these works fundamentally consider that the \ac{ml} model is trained in a centralized setting, i.e., assuming training data are centrally available. 
However, the training data are generated by the local users, who may not be  willing to share their raw data due to the above-discussed privacy concerns.


Many recent studies \cite{pervej2024resource,selfDrivingLSTMHFL,yu2018federated,Delay_tolarantFLcaching,NonIIDFLcaching,InEdgeAI,DRLforIoT,MDPF_DRL,HCP,CPPPP} used \ac{fl} for different cache placement applications to satisfy privacy constraints in wireless networks. 
\cite{pervej2024resource} proposed a collaborative privacy-preserving \ac{hfl} algorithm to predict content demands in the next timeslot. 
\cite{selfDrivingLSTMHFL} devised an \ac{hfl} algorithm for content popularity prediction using attention scores as input to the \ac{ml} model.
\cite{yu2018federated} used the \ac{fedavg} algorithm \cite{McMahanFL} to train an autoencoder model, which extracts latent user data and generates content recommendation lists for caching decisions. 
\cite{Delay_tolarantFLcaching} also used a stacked autoencoder for global content popularity prediction. 
It uses a priority mechanism during training, which avoids inaccurate learning due to missing model parameters and reduces feedback delay. 
\cite{NonIIDFLcaching} introduced an adaptive \ac{fl}–based proactive content caching algorithm that uses \ac{drl} to select participating clients and to determine the number of local training rounds.
Similarly, \cite{InEdgeAI} integrated \ac{drl} with \ac{fl} to optimize computing, caching, and communication resources.
It models popularity‐aware cache replacement at each edge node as a \ac{mdp}, with local \ac{drl} agents deciding whether to cache incoming content and using the resulting reward feedback for federated model training.
\cite{DRLforIoT} developed a federated \ac{drl}-based cooperative edge caching framework, which models the content replacement problem as a \ac{mdp} and train \ac{drl} agents to solve the problem.
\cite{MDPF_DRL} also formulates the caching policy as a \ac{mdp} and uses Dueling Deep Q Network to solve the \ac{mdp}.
\cite{HCP} considered a \ac{uav}-assisted caching network, where the cache placement is predicted by jointly considering traffic distribution, \ac{ue} mobility, and localized content popularity using \ac{fl}. 
\cite{CPPPP} proposed a content popularity prediction strategy using \ac{wgan}, which were trained in a privacy-preserving distributed manner using.
It uses \ac{wgan} to generate high-quality fake samples for predicting content popularity, thus better securing privacy and achieving a high \ac{chr}.

Most of the above \ac{fl}-based caching methods primarily focus on predicting future user requests or content popularity. 
Once these predictions are obtained, a caching policy needs to be designed using an appropriate utility function. 
Recent works have considered different utility functions to optimize cache placement. 
\Ac{chr} is a widely used utility function and has long been used for designing content placement strategies. 
For example, \cite{JavedankheradCHR2022} proposed a low-complexity weighted vertex graph-coloring algorithm that captures the characteristics of mobile users and efficiently places popular files at \acp{bs} to maximize the \ac{chr}.
However, \ac{chr} is a simplistic utility function and may not always be sufficient. 
Some recent studies have focused on minimizing system costs or maximizing overall benefits \cite{Multicell-Coordinated, jointassortpaper, CostMinFuyaru}.
\cite{Multicell-Coordinated} formulated a collaborative caching strategy aimed at minimizing the total cost incurred by content providers to the network operator.
Similarly, 
\cite{jointassortpaper} further extended the cost minimization approach to a revenue maximization problem through the joint design of personalized assortment decisions and cache planning in wireless content caching networks. 
Besides, \cite{CostMinFuyaru} 
focused on minimizing system costs, which comprises the (a) caching, (b) retrieval, and (c) update costs, by optimizing the caching decisions.

\subsection{Research Gaps and Our Contributions}
Among the above studies, \cite{MLEdgecachingPowerAllo, MLCodedcachingUnkonw, MLCloudAidedHierarchical,lekharu2024collaborativeRL, DRLjointcachingRoutingAI, DRLThreeTier} adopt centralized training approaches, which violate privacy concerns. 
Although some studies \cite{selfDrivingLSTMHFL,yu2018federated,Delay_tolarantFLcaching,NonIIDFLcaching,InEdgeAI,DRLforIoT,MDPF_DRL,HCP,CPPPP} proposed caching methods that aim to protect user privacy, they fail to address collaboration among the three key entities involved, namely, the \ac{ue}, \ac{isp}, and \ac{csp}. Even though some studies such as \cite{pervej2024resource} take different parties into consideration, they do not propose caching methods. Additionally, \cite{CostMinFuyaru, jointassortpaper, Multicell-Coordinated} exhibit limitations in designing objective functions: \cite{Multicell-Coordinated} focuses on file delivery between different \acp{bs}, \cite{jointassortpaper} does not consider revenue optimization in relation to short-term changes in future popularity, and \cite{CostMinFuyaru} assumes known user preference distributions, which are usually unknown in real applications. 
Moreover, a revenue model should incorporate past and future caching decisions, a consideration that is absent in these works. 
Our paper aims to address these limitations.

In this paper, we propose a two-stage proactive caching solution, where a \emph{Transformer} \cite{attentionAllyouNeed} is trained with a privacy-preserving mechanisms to predict users' future requests across multiple time slots.
These predictions are then utilized to optimize cache placement decisions to maximize long-term revenue. 
Our key contributions are summarized as follows:
\begin{itemize}
\item As the video caching networks benefit from delivering the requested content to the users and have costs for (a) content placement, (b) delivery from \ac{es}' cache, and (c) extraction from the \ac{csp}'s cloud server, we define a new revenue function and formulate a revenue optimization problem considering multiple future cache placement slots, each consisting of many content request mini-slots. 
\item Since the actual future request is unknown beforehand, we use a privacy-preserving \ac{fl} solution \cite{pervej2024resource} with a Transformer 
to predict users' future content requests for multiple future slots. 
\item However, since the predictions for long sequences are not always accurate, we model the actual content requests as a random variable and model its distribution based on a combination of the predicted requests from the trained Transformer, local content popularity, and prediction accuracy.
\item We then use these estimated future demands and formulate the revenue optimization problem, which we then transform into an  \ac{ilp} problem, which can be efficiently solved using existing tools.
\item 
Our simulation results demonstrate that the proposed solution outperforms other counterparts in terms of \ac{chr} and revenue, with revenue offering a more comprehensive reflection of overall system performance than \ac{chr}. 
\end{itemize}


The remainder of the paper is organized as follows. 
Section \ref{sec:sys_model} introduces the video caching system model. 
In Section \ref{sec:rev_opt}, we formulate the multi-slot revenue optimization problem, followed by the problem analysis and discussions of challenge points. 
Then, Section \ref{sec:two_stage_solution} presents our two-stage solution. 
Section \ref{sec:sim_results} provides extensive simulation results and discussions. Finally, we conclude the paper in Section \ref{sec:conclusion}. 
Some necessary notations are summarized in Table \ref{table:FLCL}.

\begin{table}[h!]\small
\centering
\caption{List of Notations}
\begin{tabular}{ |c|c| } 
 \hline
    \bf Parameter & \textbf{Definitions}  \\ 
    \hline
 $u, \mathcal{U}, U$ & User $u$, all user set, number of users  \\ 
 \hline
 $f, \mathcal{F}, F$ & Content $f$, all content set, number of files  \\ 
 \hline
 $\mathfrak{g}, \mathcal{G}, G$ &  Genre $\mathfrak{g}$, all genre set, number of genres \\ 
 \hline
 $P_{\mathfrak{g},f}$ &  Content $f$'s popularity in genre $\mathfrak{g}$ \\ 
 \hline
 $p_{u,\mathfrak{g}}$ &  User $u$'s genre preference \\ 
 \hline
 $g_{u,f}$ &  User $u$'s local content popularity \\ 
 \hline
 $S, B$ &  Storage capacity of the \ac{es}; uniform file size\\ 
 \hline
 $\tau, t$ &  Cache placement slot; mini-slot \\ 
 \hline
 $n$ &  Number of mini-slots in the cache placement slot \\ 
 \hline
 $i_{u,f}^t, \mathbf{I}_{u}^t$ &  \makecell[c]{Binary indicator that defines wether $u$ request \\content $f$ at mini-slot $t$; vector notation of \\user $u$'s content request at $t$} \\ 
 \hline
 $d_f^\tau$ &  \makecell[c]{Binary indicator that defines\\ wether content $f$ stored at slot $\tau$} \\ 
 \hline
 $c_{\mathrm{plc}}$ &  Cache placement cost \\ 
 \hline
 $c_{\mathrm{bs-ue}}$ &  Cost for content delivery from \ac{bs} to \ac{ue} \\ 
 \hline
 $c_{\mathrm{cl-bs}}$ & \makecell[c]{Cost of retrieving content \\from the cloud server to \ac{bs}} \\ 
 \hline
 $\beta$ &  Benefit of successful content delivery \\ 
 \hline
 $R^{\tau}$ & System revenue at slot $\tau$ \\ 
 \hline
 $R_{\mathrm{req}}^{\tau}$ &  Content request revenue at slot $\tau$ \\ 
 \hline
 $R_{\mathrm{plc}}^{\tau}$ &  Placement revenue at slot $\tau$ \\ 
 \hline
 $K,\tilde{K}$ &  \makecell[c]{Number of future slots after the upcoming slot; \\caching effect of further $\tilde{K}$ } \\ 
 \hline
 $k$ &  $k^{\mathrm{th}}$ caching slot \\ 
 \hline
 $\gamma$ & Decaying factor that discount effect of future slots \\ 
 \hline
 $\Theta,\Theta^*$ &  \ac{ml} model; well-trained global model \\ 
 \hline
 $r,r_g,\kappa$ & \makecell[c]{Global training round; \\total global training rounds; \\local training round} \\ 
 \hline
 $\eta, l_u$ &  Learning rate; loss function \\ 
 \hline
 $N$ & Number of content requests for input feature \\ 
 \hline
 $\mathbf{x}_u,\mathbf{\hat{y}}_u$ & Input feature; predicted output \\ 
 \hline
 $R_{\mathrm{total}}$ & Overall system revenue \\ 
 \hline
 $a,\hat{a}$ & \makecell[c]{Probability that actual content request takes \\on value based on prediction; indicator} \\ 
 \hline
 $a_{u,f}^t$ & \makecell[c]{Prediction accuracy of user $u$ \\request content $f$ at mini-slot $t$} \\ 
 \hline
 $K'$ & Total cache placement slots in the validation set \\ 
 \hline
 $\mathbf{S}_{\mathbf{I}}$ & \makecell[c]{Set of all possible values \\of actual content request $\mathbf{I}_u^t$} \\ 
 \hline
 $z_{f}^{\tau+k}$ & Product of adjacent cache decisions $d_f^{\tau+k} d_f^{\tau+k-1}$ \\ 
 \hline
 $L,M$ & \makecell[c]{Previous $L$ requests \\for generating following $M$ requests} \\ 
 \hline
 $\mathcal{F}_{u,L}$ & \makecell[c]{Previously requested $L$ content set }\\ 
 \hline
 $f_l$ & Content ID at slot $l$ in $\mathcal{F}_{u,L}$ \\ 
 \hline
 $\pmb{\phi}_{f_l}$ & Feature set of the $f_l^{\mathrm{th}}$ content \\ 
 \hline
 $\epsilon, E$ & \makecell[c]{$\epsilon^{\mathrm{th}}$ day; \\Total days of content request model per user} \\ 
 \hline
 $Q$ & Content requests per day \\ 
 \hline

\end{tabular}
\label{table:FLCL}
\end{table}

\section{System Model}
\label{sec:sys_model}
\begin{figure}
\centering
\includegraphics[scale = 0.27]{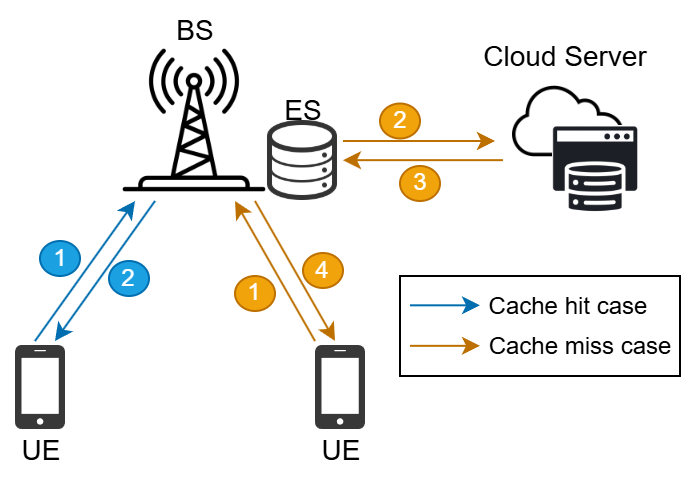}
\caption{Cache placement system model}
\label{fig_systemodel}
\end{figure}

\begin{figure*}[t]
\centering
\includegraphics[width=0.85\textwidth]{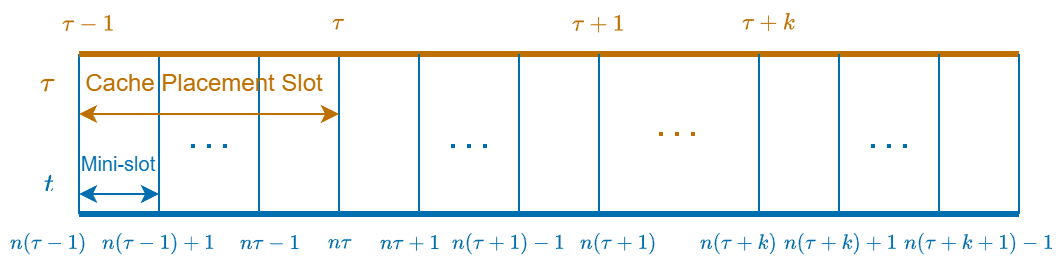}
\caption{User request and cache placement time slots}
\label{cachingReqSlots}
\end{figure*}

\subsection{Video Caching Network Model}

We consider a wireless video caching network comprising a single \ac{bs}, multiple \acp{ue}, and a \ac{cs}, as illustrated in Fig. \ref{fig_systemodel}. 
The sets of \acp{ue} and content are denoted by 
$ \mathcal{U} = \{u\}_{u=0}^{U-1} $ and $ \mathcal{F} = \{f\}_{f=0}^{F-1} $, respectively. 
Each file is assumed to have a uniform size of 
$B$ bits. 
Without loss of generality, we assume the contents are grouped by their genres.
Denote the genre set by $ \mathcal{G} = \{\mathfrak{g}\}_{\mathfrak{g}=0}^{G-1} $. 
Besides, content in each genre has popularity $P_{\mathfrak{g},f}$ and $\sum \nolimits_{f\in \mathfrak{g}} P_{\mathfrak{g},f}=1$. 
The users have their own genre preferences.
Denote the preference of the $u^{\mathrm{th}}$ user by $p_{u,\mathfrak{g}}$, where $\sum \nolimits_{\mathfrak{g}=0}^{G-1} p_{u,\mathfrak{g}}=1$.
Every \ac{ue} maintains its local content request history, from which it derives a personalized content popularity metric, $g_{u,f}$. 
However, these historical data are not shared with other \acp{ue} or the \ac{bs} due to privacy concerns.
The \ac{bs} and cloud server are controlled by the \ac{isp} and \ac{csp}, respectively, operating as independent entities under different business organizations. 
Each entity keeps its operational data confidential from the other. 
To enable privacy-preserving collaboration, the \ac{csp} can deploy an \ac{es} at the \ac{bs}, following the approach in \cite{pervej2024resource}, to facilitate 
privacy-preserving demand predictions, which then guides the cache planning. 
The storage capacity for the cache at the \ac{es} is denoted by 
$S$, where $S < (F \times B)$, where for convenience the file size $B$ is assumed to be identical for all files.
Additionally, we assume that content updates in the \ac{es} cache occur at predefined intervals, namely the start of what is referred to as {\em cache placement} slots, whose duration is denoted by $\tau$. 
However, users can request content from the \ac{csp} at the start of discrete mini-slot $t$, of which there are $n$ within each cache placement slot, as illustrated in Fig. \ref{cachingReqSlots}. 
This cache update strategy is widely adopted in practical wireless edge caching networks \cite{pervej2024efficient}, where the frequency of cache updates typically depends on network traffic conditions (e.g., peak hours) and application requirements \cite{golrezaei2013femtocaching}.

In this paper, we assume that the downlink communication between the \ac{bs} and \acp{ue} follows is error-free. This is a reasonable assumption in 4G or 5G wireless systems, enabled by the use of strong error correction coding together with selection of \bblue{a} proper modulation and coding scheme based on \ac{csi} knowledge at the \ac{bs}; residual errors are detected and corrected via automatic repeat request \cite{molisch2023wireless}.  
As such, we assume the physical-layer transmissions within each mini-slot are reliable and keep our primary focus on the cache planning.


To formally define the content request process, we introduce a binary variable $i_{u,f}^t\in\{0,1\}$ which takes the value of 1 if user $u$ requests content $f$ during mini-slot $t$, and 0 otherwise. 
To conveniently represent all content requests for a given user in a mini-slot, we also use the shorthand vector notation
$\mathbf{I}_{u}^t = (i_{u,0}^t, i_{u,1}^t, \dots, i_{u,F-1}^t) \in \mathbb{R}^F$ that contains all binary content request variables.
Besides, we assume that each user requests only a single video content within a given mini-slot, i.e., $\sum_{ f =0 }^{F-1} i_{u,f}^t=1$, $\forall u \in [0,U-1]$.
Denote the cache placement decision by a binary variable
$d_f^\tau\in\{0,1\} $, which takes the value of $1$ if content $f$ is stored in the \ac{es}' cache during cache placement slot $\tau$ and takes the value of $0$ otherwise.

Caching content at the \ac{es} incurs an additional placement cost, denoted by $c_{\mathrm{plc}}$.
If a requested file is available in the \ac{es}' cache, the \ac{bs} and \ac{es} can collaborate under {\em service-level agreements} to deliver the content locally to the user \cite{pervej2024resource}. 
This scenario is referred to as a {\em cache hit}, and the cost for local content delivery is denoted by $c_{\mathrm{bs-ue}}$.
If the requested content is not available in the \ac{es}' cache, it must be retrieved from the \ac{csp}'s cloud server, which constitutes a {\em cache miss event}. 
The total cost for delivering the requested content in this case consists of both the cloud-to-\ac{bs} retrieval cost, denoted by $c_{\mathrm{cl-bs}}$, and the subsequent delivery cost to the user, i.e., $c_{\mathrm{bs-ue}}$. 
While both cache placement and cloud-based content retrieval involve transmission over the backhaul link, $c_{\mathrm{cl-bs}}$ is generally higher than cache placement cost $c_{\mathrm{plc}}$, i.e., $c_{\mathrm{cl-bs}}>c_{\mathrm{plc}} $.
This is due to the fact that cache placement typically occurs during off-peak hours when network traffic is lower.
Finally, regardless of whether a cache {\em hit} or {\em miss} occurs, the \emph{benefit} of successfully delivering the requested content to the user is denoted by $\beta$, which represents, for instance, the rental fee a user pays for accessing the video.

\section{Revenue Optimization Cache Placement}
\label{sec:rev_opt}

In order to calculate the revenue, we need to consider the costs $c_{\mathrm{plc}}$, $c_{\mathrm{bs-ue}}$, and $c_{\mathrm{cl-bs}}$, as well as the benefits of delivering the requested content. 
Intuitively, prefetching the contents that are likely to be requested for an extended period is advantageous, as the cache placement cost $c_{\mathrm{plc}}$ is typically lower than the backhaul extraction cost $c_{\mathrm{cl-bs}}$ and is incurred only once - there are no additional cache placement costs when a cached file is delivered to the user in the subsequent content request slots.
However, the popularity of video files fluctuates over time at both global and regional levels. 
As a result, the relevance of cached files may diminish, leading to higher probabilities of cache miss events and, hence, increased backhaul costs for serving those requests. 
Therefore, periodic cache re-placement is necessary to maintain efficiency and make the best out of finite cache storage.
Moreover, since the revenue is typically calculated over a long horizon, the cache placement decisions should not be made in isolation: one must consider what has been cached in the past, as well as future demands. 
When updating the cache, minimizing the number of file replacements is desirable to reduce unnecessary costs. 
Therefore, optimizing system revenue by strategically considering the impact of cache placement across multiple cache placement slots is crucial.



Consider the situation at a cache placement slot $\tau \geq 0$, which contains $n$ mini-slots, i.e., the slots where users place their content requests to the \ac{csp}. 
The instantaneous revenue in this particular cache placement slot $\tau$ is
\begin{align}
\label{eq:one_slot_rev}
R^{\tau} 
&\coloneqq R_{\mathrm{benefit}}^{\tau} - R_{\mathrm{delivery\_cost}}^{\tau} - R_{\mathrm{placement\_cost}}^{\tau} \nonumber\\ 
&= \underbrace{\sum_{t = n\tau}^{n(\tau+1)-1} \sum_{f =0}^{F-1} \sum_{u =0}^{U-1} i_{u,f}^t \cdot \beta}_{R_\mathrm{benefit}^\tau} - \nonumber\\
&\quad \underbrace{\sum_{t = n\tau}^{n(\tau+1)-1} \sum_{f =0}^{F-1} \sum_{u =0}^{U-1} i_{u,f}^t \cdot \left[c_{\mathrm{bs-ue}} + (1-d_f^{\tau}) c_{\mathrm{cl-bs}}\right]}_{R_{\mathrm{delivery\_cost}}^\tau} -\nonumber\\
&\quad \underbrace{ c_{\mathrm{plc}} \sum_{f=0}^{F-1} d_f^{\tau} (1-d_f^{\tau-1})}_{R_{\mathrm{placement\_cost}}^{\tau}}, 
\end{align}
where the first term is the benefit of delivering the requested files, the second term is the cost of delivering the requested file to the user, and the last term is the cost of placing the files into the cache.

From a system optimization perspective, the goal is to optimize the cache decisions to maximize system revenue over a long period, which consists of infinitely many cache placement slots. 
Therefore, the ideal revenue of a video caching wireless network is defined as 
\begin{align}
\label{eq:total_revenue}
    R_{\mathrm{system\_revenue}} \coloneqq \lim_{\mathrm{T} \to \infty} \frac{1}{T} \left[ \sum_{\tau=0}^\mathrm{T-1} R^\tau \right]. 
\end{align}

\begin{remark}
Unless future content requests in all cache placement slots are known beforehand, (\ref{eq:total_revenue}) cannot be calculated directly.  
In practical wireless video caching networks, we can neither consider infinitely many cache placement slots to calculate the revenue nor assume that the users will share what content they will request in the far future.
As such, we need to find a way to predict the requests and an objective function to plan our caching policy that maximizes this revenue approximately.  
\end{remark}

Intuitively, (\ref{eq:one_slot_rev}) and (\ref{eq:total_revenue}) suggest that the cache decisions need to be made jointly for all $\mathrm{T}$ caching slots since the placement cost depends on the sequential cache placement decisions. 
Since we do not know the future, and demand prediction becomes increasingly unreliable for longer timescales, we may only be able to reasonably predict the demands for a short period. 
Nonetheless, we need to design an objective function that considers the benefits that storing a file in the current caching slot also over future caching slots.
As such, when we are in cache placement slot $\tau$, we calculate the revenue over some arbitrary $K+\tilde{K}$ caching slots as
\begin{equation}
\label{eq:rev_function}
\begin{aligned}
    & R_{\mathrm{total}}^\tau 
    =\sum_{k=0}^{K+\tilde{K}-1} \gamma^k R^{\tau+k} \\ 
    & = \sum_{k=0}^{K+\tilde{K}-1} \gamma^k \bigg[\sum_{t = n(\tau+k)}^{n(\tau+k+1)-1} \sum_{f =0}^{F-1} \sum_{u =0}^{U-1} i_{u,f}^t( \beta - c_{\mathrm{bs-ue}}  - \\
    &(1-d_f^{\tau+k})c_{\mathrm{cl-bs}}) \bigg] - \sum_{k=0}^{K+\tilde{K}-1} \gamma^k \cdot c_{\mathrm{plc}} \sum_{f=0}^{F-1} d_f^{\tau+k} (1-d_f^{\tau+k-1}),
\end{aligned}
\end{equation}
where 
$0 < \gamma \leq 1$ is a decaying factor that assigns less weight \bblue{to} future cache placement slots \bblue{and is a hyperparameter}.
This formulation assumes that changes of the cache placement will occur in $K$ future slots; while the cache placement beyond that point is assumed to be static, the benefit of the placement is computed for an additional $\Tilde{K}$ slots, and thus, in (\ref{eq:rev_function}),  calculate the revenue over $K+\Tilde{K}$ caching slots. 
Therefore, we want to find the cache decisions for all placement slots, i.e., $\{\{d_f^{\tau+k}\}_{f=0}^{F-1}\}_{k=0}^{K-1}$, to maximize system revenue.
The cache planning optimization problem is thus formulated as 
\begin{subequations}
\label{optproblem}
\begin{align}
&\underset{\{\{d_f^{\tau+k}\}_{f=0}^{F-1} \}_{k=0}^{K-1}} {\mathrm{maximize}} \qquad  R_{\mathrm{total}}^\tau  \tag{\ref{optproblem}} \\
&~~\mathrm{subject ~to} \quad C1:\quad  B\sum_{ f=0 }^{F-1} d_f^{\tau+k} \leq S, k \in [0,K-1], \\ 
&\qquad C2: \quad d_f^{\tau+k}\in\{0,1\},f \in [0,F-1], k \in [0,K-1],
\end{align}
\end{subequations}
where constraints $C1$ and $C2$ are because of the limited cache size at \ac{es} and the binary cache placement decision variables, respectively.

Now, it is important to recall that our goal is to approximately maximize (\ref{eq:total_revenue}). 
While one may stick to the decisions $\{\{d_f^{\tau+k}\}_{f=0}^{F-1} \}_{k=0}^{K-1}$ for all $K$ caching slots obtained from (\ref{optproblem}), this may not be the best strategy due to the following reasons.
At a caching slot $\tau$, we only need to place the content for that particular slot. 
(\ref{eq:total_revenue}) can guide us to look ahead to compute the impact of this decision, i.e., $\{d_f^{\tau+k}\}_{f=0}^{F-1}$, in the future $\tilde{\tau} > \tau$ slots, and thus optimize the decision under the currently available knowledge. However,  
the decision at the next caching slot $\tilde{\tau}+1$ should utilize the full then-available historical information, including the demands during caching slot $\tau$, to compute an improved caching decision.
As such, it is meaningful to solve (\ref{optproblem}) in a sequential manner.
In other words, we decide the cached content in every caching slot $\tau$ by looking at $K$ future slots.

The optimization problem derived in Sec. III.A  
is a multistage knapsack problem with the decision variables over $K$ slots from $k=0$ to $K-1$, and thus, is NP-hard \cite{bampis2022multistage}.


\begin{remark}
The optimization problem  (\ref{optproblem}) is challenging to solve directly for multiple reasons. 
Firstly, cached content should be determined at the beginning of $\tau$, while the revenue relies on future content requests $i_{u,f}^t$, which is unknown. 
Secondly, the last term in (\ref{eq:rev_function}), i.e., $\gamma^k \cdot c_{\mathrm{plc}} \sum\nolimits_{f=0}^{F-1} d_f^{\tau+k} (1-d_f^{\tau+k-1})$, indicates that a previously cached content yields cost savings in the next caching slot.
Intuitively, this sequential coupling of the cache planning and placement cost (saving) needs to be considered to capture the long-term caching benefits.
\end{remark}


Due to the challenges mentioned above, we propose a two-stage solution.
Firstly, we train a Transformer \cite{attentionAllyouNeed} to predict the future content requests using a privacy-preserving \ac{fl} algorithm. 
Then, we show an intuitive way to utilize the prediction results from the Transformer to make caching decisions. 

\section{Problem Transformation and Two-Stage Solution}
\label{sec:two_stage_solution}

The details of the proposed two-stage solution are summarized below.

\subsection{Stage 1: \ac{fl}-aided Demand Predictions}

We assume that each user leverages their historical content requests $\mathbf{I}_{u}^t$ to prepare their respective training dataset, denoted by $\mathcal{D}_u = \{\mathbf{x}_u^m,\mathbf{y}_u^m\}_{m=0}^{D_u-1}$, where ${D_u}$ represents the total number of training samples, $\mathbf{x}_u^m$ is the feature set and $\mathbf{y}_u^m$ is the corresponding label. 
We want to train a Transformer parameterized by weight vector $\Theta$. 
More specifically, our goal is to use $\Theta$ to predict users' content requests for all mini-slots $t \in [n\tau, n(\tau+K)-1]$. 
However, since training data only belongs to the user and is private, we use the \ac{fedavg} \cite{McMahanFL} algorithm to train $\Theta$ while preserving data privacy. 
The training process is summarized below.

\subsubsection{Offline Model Training Using FedAvg\cite{McMahanFL}} 
At the start of each global training round $r$, which refers to a communication cycle where the server coordinates model updates across clients, 
the \ac{es} distributes the global model $\Theta^r$ to all users in $\mathcal{U}$. 
Upon receiving the model, each user updates their local model $\Theta_u$ as $\Theta_u^0 \gets \Theta^r$ and performs $\kappa$ mini-batch \ac{sgd} updates as
\begin{equation}
\Theta_u^{\kappa} = \Theta_u^0 - \eta \sum\limits_{\tau=0}^{\kappa-1} \nabla l \left(\Theta_u^\tau | (\zeta \sim \mathcal{D}_u) \right)
\end{equation}
where $\eta$ represents the learning rate, $\nabla$ is gradient operator, $l$ denotes the loss function, and $\mathcal{D}_u$ is the prepared training dataset. 
After completing the local updates, users upload their trained models to the \ac{es}.
The \ac{es} then aggregates the received models as
\begin{equation}
\Theta^{r+1} = \frac{1}{U}\sum_{u=0}^{U-1} \Theta_u^\kappa. 
\end{equation}
Note that we use equal aggregation weights, i.e., $1/U$, since we assume all users have the same number of training data samples. 
The \ac{es} then shares the updated global model with all users, who repeat the same steps. 
This iterative process continues for $r_g$ global training rounds.

\subsubsection{Predictions Using Trained $\Theta^*$}
\label{sec:predion_proc}
Once the \ac{es} has the trained $\Theta^*$, it broadcasts the model to all users. 
The users then leverage the trained $\Theta^{*}$ to predict their future content requests\footnote{While users do not share their actual content requests, this prediction is made as an ML task.}. 
Since our goal is to forecast content requests for the next $nK$ mini-slots starting at time $t$, the input feature is derived from the past $N$ mini-slot content requests, given by $\mathbf{x}_u = (\mathbf{I}_{u}^{t-N}, \mathbf{I}_{u}^{t-N+1}, \cdots, \mathbf{I}_{u}^{t-1}) \in \mathbb{R}^{N\times F}$.
Denote the corresponding predicted output for the next $nK$ mini-slots by $\mathbf{\hat{y}}_u = (\mathbf{\hat{I}}_{u}^{t}, \mathbf{\hat{I}}_{u}^{t+1}, \cdots, \mathbf{\hat{I}}_{u}^{t+nK-1}) \in \mathbb{R}^{nK\times F}$.
Note that each predicted  $\hat{i}_{u,f}^t \in [0,1]$ in $\mathbf{\hat{I}}_{u}^{t}$ represents the {\em probability} that user 
$u$ will request content $f$ during mini-slot $t$.
All users then share their predicted content requests to the \ac{es}, which processes these results to solve the optimization problem, which are presented in the sequel.

Unfortunately, the predictions are not perfect, which is a quite well-known effect in \ac{ml}.
Typically, the prediction accuracies for time-series data decrease as we move farthest in the prediction time slots.
In our case, since (\ref{optproblem}) relies on the actual content request $i_{u,f}^t$, wrong predictions of $i_{u,f}^t$ directly affect caching decisions, potentially leading to suboptimal cache planning. 
Thus, relying solely on the \ac{fl}-based prediction and replacing $i_{u,f}^t$ by the prediction $\hat{i}_{u,f}^t$ from the trained $\Theta^{*}$ may not be sufficient.

\subsection{Stage 2: Transformer's Prediction Assisted Revenue Optimization}
\label{sec:request_estimation}

To address the prediction inaccuracies, we model the future requests, $i_{u,f}^t$, for all $t \in [n\tau, n(\tau+K)-1]$, as random variable.
Then, the expected achievable revenue is calculated as
\begin{align}
\label{eq:expectedRevEqn0}
&\mathbb{E}[R_{\mathrm{total}}^\tau] 
= \mathbb{E} \bigg[\sum\limits_{k=0}^{K-1} \gamma^k \sum\limits_{t = n(\tau+k)}^{n(\tau+k+1)-1} \sum\limits_{f =0}^{F-1} \sum\limits_{u=0}^{U-1} i_{u,f}^t \big( \beta - c_{\mathrm{bs-ue}} \nonumber \\
&- (1-d_f^{\tau+k})c_{\mathrm{cl-bs}}\big) - \sum\limits_{k=0}^{K-1} \gamma^k \cdot c_{\mathrm{plc}} \sum\limits_{f=0}^{F-1} d_f^{\tau+k} (1-d_f^{\tau+k-1})\bigg] \nonumber\\
& =\sum\limits_{k=0}^{K-1} \gamma^k \sum\limits_{t = n(\tau+k)}^{n(\tau+k+1)-1} \sum\limits_{f=0}^{F-1} \sum\limits_{u=0}^{U-1} \mathbb{E} \left[i_{u,f}^t\right] \times \big( \beta - c_{\mathrm{bs-ue}} \nonumber\\ 
&- (1-d_f^{\tau+k})c_{\mathrm{cl-bs}}\big) - \sum\limits_{k=0}^{K-1} \gamma^k \cdot c_{\mathrm{plc}} \sum\limits_{f=0}^{F-1} d_f^{\tau+k} (1-d_f^{\tau+k-1}),
\end{align}
where $\mathbb{E}[\cdot]$ is the expectation operator that depends on the randomness of the actual content request $i_{u,f}^t$. 

In order to calculate the $\mathbb{E}[i_{u,f}^t]$, we now focus on modeling the random variable $i_{u,f}^t$. Note $i_{u,f}^t$ serves as an indicator at each position of $\mathbf{I}_{u}^t$, to indicate which file is selected.
We assume that $i_{u,f}^t$ follows the following definition.
\begin{align}
    \mathrm{P}[i_{u,f}^t=1] \coloneqq
    \begin{cases}
        \hat{i}_{u,f}^t, & \text{with  probability}~ \bar{a}_{u,f}^{t}, \\
        g_{u,f}, & \text{with probability}~ (1 - \bar{a}_{u,f}^{t}), 
    \end{cases},
\end{align}
where $\bar{a}_{u,f}^{t}$ is an {\em unknown} probability and $g_{u,f} \coloneqq \frac{\sum
_{t \in \mathcal{T}_{\mathrm{his}} } i_{u,f}^t}{\sum_{t \in \mathcal{T}_{\mathrm{his}}} \sum_{f =0}^{F-1}i_{u,f}^t}$, where $\mathcal{T}_{\mathrm{his}}$ denote the historical mini-slot set, is the local content popularity. 
In order to estimate the unknown probability $\bar{a}_{u,f}^{t}$, we use the prediction accuracy from the Transformer $a_{u,f}^{t}$, which is calculated for each \emph{mini-slot} $t$ within \emph{cache placement slot} $\tau$ using the user's validation dataset as 
\begin{equation}
\label{eq:accuracy}
    a_{u,f}^{t} \coloneqq \frac { \sum _{\tau'=\tau}^{\tau+K'} \delta \left( f - \mathrm{argmax} [\hat{\mathbf{I}}_{u}^t] \right) \times \delta \left( f - \mathrm{argmax} [\mathbf{I}_{u}^t] \right) } {\sum_{\tau'=\tau}^{\tau+K'} \delta \left( f - \mathrm{argmax} [\mathbf{I}_{u}^t] \right)}, \forall t \in \tau'
\end{equation}
where $K'$ denotes the total \emph{cache placement slots} in the validation set, and the summation over $t$ is carried out across the corresponding \emph{mini-slot} within each $\tau$ interval. 
Besides, $\delta(\cdot)$ is the delta function.

We have the following proposition.

\begin{proposition}
\label{prop}
Based on our random request model, the prediction accuracy $a_{u,f}^t$ and predicted request $\hat{i}_{u,f}^t$ from the Transformer, the expectation of actual content request is written as
\begin{equation}
\label{eq:estimation_true}
\mathbb{E}[i_{u,f}^t] = \hat{i}_{u,f}^t a_{u,f}^t + g_{u,f} (1-a_{u,f}^t),
\end{equation}
where $\hat{i}_{u,f}^t$ is the $f^{\mathrm{th}}$ entry of the vector $\hat{\mathbf{I}}_u^t$.
\end{proposition}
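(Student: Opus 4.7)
The plan is to prove the proposition by a direct application of the law of total expectation (equivalently, the law of total probability) to the two-component mixture that defines the random request model, followed by the substitution of the validation-set estimator $a_{u,f}^t$ for the unobservable mixing probability $\bar{a}_{u,f}^t$. The statement is essentially a definitional consequence of how $i_{u,f}^t$ was constructed, so no deep machinery is required; the main thing to be careful about is keeping the distinction between the true (unknown) accuracy $\bar{a}_{u,f}^t$ and the sample-based estimate $a_{u,f}^t$ clear, because the equality sign in \eqref{eq:estimation_true} is really exact only after that identification is made.

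First, I would exploit the fact that $i_{u,f}^t \in \{0,1\}$ is a Bernoulli-valued indicator, which gives the elementary identity $\mathbb{E}[i_{u,f}^t] = \mathrm{P}[i_{u,f}^t = 1]$. This reduces the claim to computing a single probability. Next, I would introduce the latent regime variable implicit in the mixture: let $A$ be the event that the Transformer's predicted request governs user $u$'s behavior at mini-slot $t$ (which by definition carries probability $\bar{a}_{u,f}^t$), and $A^{c}$ the complementary event that the local popularity governs it (with probability $1 - \bar{a}_{u,f}^t$). Under the case definition given immediately before the proposition, one has $\mathrm{P}[i_{u,f}^t = 1 \mid A] = \hat{i}_{u,f}^t$ and $\mathrm{P}[i_{u,f}^t = 1 \mid A^{c}] = g_{u,f}$. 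Applying the law of total probability then yields
\begin{equation*}
\mathrm{P}[i_{u,f}^t = 1] = \bar{a}_{u,f}^t \,\hat{i}_{u,f}^t + (1 - \bar{a}_{u,f}^t)\, g_{u,f}.
\end{equation*}

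Finally, since $\bar{a}_{u,f}^t$ is unknown in practice, I would identify it with the empirical accuracy $a_{u,f}^t$ defined in \eqref{eq:accuracy}, which is the natural frequentist estimator obtained by comparing the Transformer's argmax prediction to the ground-truth request on the validation set. Substituting $\bar{a}_{u,f}^t \leftarrow a_{u,f}^t$ produces the claimed expression $\mathbb{E}[i_{u,f}^t] = \hat{i}_{u,f}^t a_{u,f}^t + g_{u,f}(1 - a_{u,f}^t)$. The only conceptual point worth flagging, rather than a genuine technical obstacle, is that this last substitution is an estimation step: the proposition should be read as giving the estimator for $\mathbb{E}[i_{u,f}^t]$ that the second-stage optimizer will plug into \eqref{eq:expectedRevEqn0}, not a statement about the true expectation of a fully specified distribution. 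Once this is noted, the proof is complete.
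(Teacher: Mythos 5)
Your proposal is correct and follows essentially the same route as the paper: decompose over the two regimes (Transformer-governed vs.\ popularity-governed) via the law of total probability, obtain $\hat{i}_{u,f}^t \bar{a}_{u,f}^{t} + g_{u,f}(1-\bar{a}_{u,f}^{t})$, and then substitute the empirical accuracy $a_{u,f}^t$ for the unknown $\bar{a}_{u,f}^{t}$. The only difference is presentational --- you work directly with the scalar Bernoulli identity $\mathbb{E}[i_{u,f}^t]=\mathrm{P}[i_{u,f}^t=1]$, whereas the paper computes the vector expectation $\mathbb{E}[\mathbf{I}_u^t\,|\,\hat{\mathbf{I}}_u^t]$ by summing over the one-hot set $\mathbf{S}_{\mathbf{I}}$ and then reads off the $f^{\mathrm{th}}$ entry --- and your explicit flagging of the final step as an estimation rather than an exact equality is a fair and accurate observation.
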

\begin{proof}
Under condition of transformer prediction $\mathbf{\hat{I}}_u^t$, the probability of actual request $\mathbf{I}_u^t$ is
\begin{equation}
\label{eq:probability}
\begin{split}
\mathrm{P}[\mathbf{I}_u^t | \mathbf{\hat{I}}_u^t] &= \mathrm{P}[\mathbf{I}_u^t | \mathbf{\hat{I}}_u^t \cap \{ \hat{a}=1\}]\mathrm{P}[\mathbf{\hat{I}}_u^t \cap \{ \hat{a}=1\}]
\\&\ \ + \mathrm{P}[\mathbf{I}_u^t | \mathbf{\hat{I}}_u^t \cap \{ \hat{a}=0\}]\mathrm{P}[\mathbf{\hat{I}}_u^t \cap \{ \hat{a}=0\}]
\\& = \mathbf{\hat{I}}_u^t \cdot\mathbf{\bar{a}}_u^t + \mathbf{g}_{u} \cdot (1-\mathbf{\bar{a}}_u^t)
\end{split}
\end{equation}
where $\mathbf{\bar{a}}_u^t = (\bar{a}_{u,0}^t, \bar{a}_{u,1}^t, \dots, \bar{a}_{u,F-1}^t) \in \mathbb{R}^F$ and $\hat{a}$ is indicator function that takes the value of $1$ when the request is based on the Transformer prediction and takes the value of $0$ when it is based on the local popularity.
We denote the set of all possible values of actual content request $\mathbf{I}_u^t$ as a set $\mathbf{S}_{\mathbf{I}}$ with $F$ one-hot-encoded vectors indicating $F$ possible values of the content request indicator function $\mathbf{I}_u^t$. 
The expectation of the actual requests can be written as
\begin{equation}
\begin{split}
\mathbb{E}[\mathbf{I}_u^t | \mathbf{\hat{I}}_u^t] &= \sum_{\mathbf{I}_u^t \in \mathbf{S}_{\mathbf{I}}} \mathbf{I}_u^t \cdot \mathrm{P}[\mathbf{I}_u^t | \mathbf{\hat{I}}_u^t]
\\&= \sum_{\mathbf{I}_u^t \in \mathbf{S}_{\mathbf{I}}} \mathbf{I}_u^t \cdot \mathbf{\hat{I}}_u^t \cdot \mathbf{\bar{a}}_u^t + \sum_{\mathbf{I}_u^t \in \mathbf{S}_{\mathbf{I}}} \mathbf{I}_u^t \cdot \mathbf{g}_{u} \cdot (1-\mathbf{\bar{a}}_u^t)
\end{split}
\end{equation}
For each position $f$ of this expectation, we can write
\begin{equation}
\begin{split}
\mathbb{E}[i_{u,f}^t] &= \mathbb{E}[\mathbf{I}_u^t | \mathbf{\hat{I}}_u^t]_f 
\\&= \sum_{\mathbf{I}_u^t \in \mathbf{S}_{\mathbf{I}}} i_{u,f}^t \hat{i}_{u,f}^t \bar{a}_{u,f}^{t} + \sum_{\mathbf{I}_u^t \in \mathbf{S}_{\mathbf{I}}} i_{u,f}^t g_{u,f} (1-\bar{a}_{u,f}^{t})
\\&= \hat{i}_{u,f}^t \bar{a}_{u,f}^{t} + g_{u,f} (1-\bar{a}_{u,f}^{t})
\end{split}
\end{equation}
Since we use $a_{u,f}^t$ to estimate probability $\bar{a}_{u,f}^{t}$, we have
\begin{equation}
\begin{split}
\mathbb{E}[i_{u,f}^t] 
= \hat{i}_{u,f}^t a_{u,f}^t + g_{u,f} (1-a_{u,f}^t)
\end{split}
\end{equation}
\end{proof}

Plugging (\ref{eq:estimation_true}) into (\ref{eq:expectedRevEqn0}), we get the expected achievable revenue as
\begin{align}
\label{eq:expRtotal}
\mathbb{E}[R_{\mathrm{total}}^\tau]&=\sum\limits_{k=0}^{K-1} \gamma^k \sum\limits_{t = n(\tau+k)}^{n(\tau+k+1)-1} \sum\limits_{f =0}^{F-1} \sum\limits_{u=0}^{U-1} \left(\hat{i}_{u,f}^t a_{u,f}^t + g_{u,f} (1-a_{u,f}^t) \right) \nonumber \\ 
&  \cdot ( \beta - c_{\mathrm{bs-ue}} 
- (1-d_f^{\tau+k})c_{\mathrm{cl-bs}}) \nonumber \\
&-\sum_{k=0}^{K-1} \gamma^k \cdot c_{\mathrm{plc}} \sum_{f=0}^{F-1} d_f^{\tau+k} (1-d_f^{\tau+k-1}),
\end{align}

Note that this expected achievable revenue in (\ref{eq:expRtotal}) only changes the objective function in (\ref{optproblem}).
Therefore, the optimization problem is still a multistage Knapsack problem.

\subsection{Problem Transformation}
\label{sec:z_transformation}


While a multi-stage Knapsack problem is typically solvable using \ac{ilp}, the multiplication of two variables are not allowed.
Since we have a product term $d_f^{\tau+k} (1 - d_f^{\tau+k-1})=d_f^{\tau+k} - d_f^{\tau+k} d_f^{\tau+k-1}$ in (\ref{eq:expRtotal}), we introduce a new variable $z_f^{\tau+k}$ to replace this term as
\begin{equation}
\label{z_variable}
z_f^{\tau+k} \coloneqq d_f^{\tau+k} \cdot d_f^{\tau+k-1}.
\end{equation}
Then, we enforce the following constraints to replace the multiplicative term.
\begin{align}
&C3:z_f^{\tau+k} \leq d_f^{\tau+k}, \\& C4: z_f^{\tau+k} \leq d_f^{\tau+k-1},
\\& C5: z_f^{\tau+k} \geq d_f^{\tau+k} + d_f^{\tau+k-1}-1,
\\ &C6:z_f^{\tau+k}\in\{0,1\}, \quad \forall f \in [0,F-1], k \in [0,K-1],
\end{align}
where $C3$ ensures that if $d_f^{\tau+k}=0$, then $z_f^{\tau+k}$ must be $0$. 
Similarly, $C4$ ensures that if $d_f^{\tau+k-1}=0$, then $z_f^{\tau+k}$ must be $0$. 
Besides, constraint $C5$ enforces that $z_f^{\tau+k}=1$ only if $d_f^{\tau+k} = d_f^{\tau+k-1} = 1$. 
Finally, $C6$ defines $z_f^{\tau+k}$ as a binary variable.

Therefore, we transform the original problem as 
\begin{subequations}
\label{optproblem_Transformed}
\begin{align}
\underset{\{\{d_f^{\tau+k}\}_{f=0}^{F-1}\}_{k=0}^{K-1}} {\mathrm{maximize}} & \qquad  \mathbb{E}[\tilde{R}_{\mathrm{total}}] , \tag{\ref{optproblem_Transformed}} \\
\mathrm{subject ~to} &\qquad C1, ~C2, ~C3, ~C4, ~C5, ~C6,
\end{align}
\end{subequations}
where $\mathbb{E}[\tilde{R}_{\mathrm{total}}] = \sum_{k=0}^{K-1} \gamma^k \sum_{t = n(\tau+k)}^{n(\tau+k+1)-1} \sum_{f=0}^{F-1} \sum_{u=0}^{U-1} \big( \hat{i}_{u,f}^t a_{u,f}^t + g_{u,f} (1-a_{u,f}^t)\big) \cdot ( \beta - c_{\mathrm{bs-ue}} 
- (1-d_f^{\tau+k})c_{\mathrm{cl-bs}}) - \sum_{k=0}^{K-1} \gamma^k \cdot c_{\mathrm{plc}} \sum_{f=0}^{F-1} (d_f^{\tau+k}-z_f^{\tau+k})$.

It is easy to check \bblue{that} the objective function and \bblue{the} constraints are all linear, and the variables are integer.
Therefore, (\ref{optproblem_Transformed}) is a \ac{ilp} problem. While it remains NP-hard, this problem is solvable using existing solvers such as Gurobi \cite{gurobi}. 
The worst-case scenario for solving this problem is employing a brute-force search, which exhaustively explores all possible combinations of the decision variables $\{\{d_f^\tau\}_{f=0}^{F-1}\}_{\tau=0}^{K-1}$. 
Given that there are $F$ files and $K$ cache placement slots, each of the $KF$ file-slot pairs independently allows two choices (store or not store). 
Thus, the \bblue{worst case} computational \bblue{time} complexity is $\mathcal{O}(2^{KF})$.


\subsection{Cache Placement Procedure}
\label{sec:caching_proc}
After receiving the trained model, each user predicts the content requests for the next $K$ cache placement slots. 
These predicted results are then used to estimate the actual content demand, which is subsequently uploaded to the \ac{es}. 
Based on this information, the \ac{es} computes the optimal cache placement decisions by solving (\ref{optproblem_Transformed}). 
Now, given the cache decisions $\{\{d_f^{{\tau+k}^{*}}\}_{f=0}^{F-1}\}_{k=0}^{K-1}$ for the current cache placement slot $\tau$ to future $\tau+K-1$ cache placement slots, we store the files into the cache at the current slot $\tau$ based on $\{d_f^{{\tau}^*}\}_{f=0}^{F-1}$. 
Then, when we move to the next caching slot $\tau+1$, we have new actual historical content requests information that are leveraged to predict the demands for the next $nK$ mini-slots as described in Section \ref{sec:predion_proc}. 
These predicted demands are then utilized to get the cache placement decisions for $(\tau+1)$ to $(\tau+K)$ caching slots by solving (\ref{optproblem_Transformed}). 
This procedure is repeated in every $\tau$ intervals to utilize newly observed historical data for making future demand predictions, which then guides the caching policy optimization. 
The entire caching procedure after \ac{fl} trained Transformer solution $\Theta^*$ is shown as Algorithm \ref{algo1}.

\begin{algorithm}[!t]
\caption{Cache Placement Methods}
\label{algo1}
\begin{algorithmic}[1]
\STATE \textbf{Input:} Global model $\Theta^*$, total cache placement slots $\mathrm{T}$, number of future caching slots $K$
\STATE Each \ac{ue} receives the trained global model $\Theta^*$ from the server
\FOR{$\tau = 0, 1, 2, \dots, \mathrm{T}-1$}
    \FOR{$u=0, 1,2,...,U-1 $}
        \STATE Predicts future content requests $\{\{ \hat{i}_{u,f}^{t} \}_{f=0}^{F-1}\}_{t =n\tau}^{n(\tau+K)-1}$ based on available historical information using $\Theta^*$
        \FOR{all mini-slots $t = n\tau,n\tau+1, ..., n(\tau+K)-1$}
            \STATE Estimates the actual content requests as $i^{t}_{u,f,\mathrm{est}} = \hat{i}_{u,f}^t a_{u,f}^t + g_{u,f} (1-a_{u,f}^t)$
        \ENDFOR
        \STATE Uploads $\{\{i^{t}_{u,f,\mathrm{est}} \}_{f=0}^{F-1}\}_{t =n\tau}^{n(\tau+K)-1}$ to the \ac{es}
    \ENDFOR
    \STATE \Ac{es} uses all estimated requests $i_{u,f,\mathrm{est}}^t$ from all users and solve (\ref{optproblem_Transformed}) using existing solver (e.g., Gurobi) to get optimal cache decisions $\{\{d_f^{\tau+k^*} \}_{f=0}^{F-1}\}_{k=0}^{K-1}$
    \STATE \Ac{es} store files into its cache in the current caching slot $\tau$ based on $\{d_f^{\tau^*} \}_{f=0}^{F-1}$ 
    \STATE Users place content requests and the \ac{cs} delivers the requested content from the \ac{es} if its in the \ac{es}' cache, otherwise it extracts the content from its cloud and deliver to the requester using the \ac{isp}
\ENDFOR
\STATE \textbf{Output:} Optimal cache decisions $\{\{d_f^{\tau+k^*} \}_{f=0}^{F-1}\}_{k=0}^{K-1}$, system cache-hit-ratio and revenue
\end{algorithmic}
\end{algorithm}
It is worth noting that since users only share their estimation results, their true future requests are not revealed.
Besides, our above solution is general, such that any \ac{ml} can be used easily for demand prediction. 
Suppose the time complexity of performing the forward propagation of the trained $\Theta^*$ is  $\mathcal{O}(NF \times nKF))$. 
There are $UF(nK-1)$ iterations for cache decision computation at each $\tau$, and a total $\mathrm{T}$ number of caching slots. 
Therefore, the overall computational time complexity of this algorithm\footnote{We ignored the time complexity of computing and uploading $i_{u,f,\mathrm{est}}^t$ for simplicity.} is $\mathcal{O} \left(\mathrm{T} \times \left[(NF \times nKF) \times (UF(nK-1)) + 2^{KF} \right] \right)$. 

\section{Simulation Results and Discussions}
\label{sec:sim_results}
This section presents performance evaluation metrics, our simulation settings, and result comparisons with existing baselines.

\subsection{Evaluation metric}
While we use Algorithm \ref{algo1} to find the cache placement decisions sequentially, we want to calculate the actual expected revenue of the video caching network as described in Section \ref{sec:rev_opt}.
Given that we cached the content based on $\{d_f^{\tau^*} \}_{f=0}^{F-1}$, 
we calculate the instantaneous revenue for these cached content after the end of the current cache placement slot $\tau$ as
\begin{align}
\label{eq:R_total}
R_{\mathrm{eva}}^{\tau}
\coloneqq & \sum\limits_{t=n\tau}^{n(\tau+1)-1}\sum\limits_{f=0}^{F-1} \sum\limits_{u =0}^{U-1} i_{u,f}^t ( \beta - c_{\mathrm{bs-ue}} - (1-d_f^{\tau^*}) c_{\mathrm{cl-bs}}) \nonumber\\
&- c_{\mathrm{plc}} \sum_{f=0}^{F-1} d_f^{\tau^*} (1-d_f^{\tau-1}).
\end{align}
Note that since $R_{\mathrm{eva}}^{\tau}$ is calculated after the end of the $[n\tau, n(\tau+1)-1]$ mini-slots, (\ref{eq:R_total}) represents the instantaneous revenue obtained from storing $\{d_f^{\tau^*}\}_{f=0}^{F-1}$ with the ground truth content request information.
Therefore, we stress that while we used the expected content requests over $K$ caching slots to design our caching policy in Section \ref{sec:two_stage_solution}, we are still evaluating the \emph{actual} system revenue obtained from our proposed caching strategy.

The instantaneous revenue largely depends on (a) which files are in the cache and (b) variations in users' content requests. 
User request patterns vary across different cache placement slots; for instance, in certain slots, users may collectively request a limited subset of files, whereas in other slots, each user may request distinct files. 
Such variability may inherently yield fluctuations in system revenue. 
We observe similar trends in our simulation results. 
In Fig. \ref{fig:taus}, we compare the instantaneous revenues of our proposed two-stage solution with the \emph{ground truth case}.
\bblue{The simulation assumptions and other key parameters used to obtain Fig. \ref{fig:taus} and the rest of the results are discussed in the sequel.}
The ground truth case essentially is the special case in which a Genie has perfect knowledge of all future content requests $i_{u,f}^t$ for all mini-slots in the subsequent $K$ cache placement slots, which are then used in the objective function in  (\ref{eq:rev_function}). 
We then follow the transformation steps outlined in Section \ref{sec:z_transformation}. 
Consequently, the resulting optimization problem under \emph{perfect} future knowledge is expressed as
\begin{subequations}
\label{GT_optproblem_Transformed}
\begin{align}
\underset{\{\{d_f^{\tau+k}\}_{f=0}^{F-1}\}_{k=0}^{K-1}} {\mathrm{maximize}} & \qquad  R_{\mathrm{total}}^{\tau} , \tag{\ref{GT_optproblem_Transformed}} \\
\mathrm{subject ~to} &\qquad C1, ~C2, ~C3, ~C4, ~C5, ~C6,
\end{align}
\end{subequations}
where $R_{\mathrm{total}}^\tau 
    =\sum_{k=0}^{K+\tilde{K}-1} \gamma^k \big[\sum_{t = n(\tau+k)}^{n(\tau+k+1)-1} \sum_{f =0}^{F-1} \sum_{u =0}^{U-1} i_{u,f}^t( \beta - c_{\mathrm{bs-ue}}  - 
    (1-d_f^{\tau+k})c_{\mathrm{cl-bs}}) \big] - \sum_{k=0}^{K+\tilde{K}-1} \gamma^k \cdot c_{\mathrm{plc}} \sum_{f=0}^{F-1} (d_f^{\tau+k} -z_f^{\tau+k})$.

As shown in Fig. \ref{fig:taus}, there are small performance differences between our proposed two-stage solution and the ground truth best case. 
These differences stem from the inherent imperfections in future demand predictions, as perfect forecasting is difficult.
Nevertheless, due to the relatively high accuracy of our Transformer-based predictions and the incorporation of content popularity in cache decision-making, the performance of our two-stage solution remains closely aligned with that of the ground truth.

Now, since we cannot practically consider $\mathrm{T} \to \infty$ to evaluate (\ref{eq:total_revenue}), we use (\ref{eq:R_total}) to calculate the expected system revenue over a finite number of caching slots as
\begin{align}
    R_{\mathrm{eva}} = \frac{1}{\mathrm{T}} \sum_{\tau = 0}^{\mathrm{T}-1}R_{\mathrm{eva}}^{\tau}. 
\end{align}

\begin{figure}
\centering
\includegraphics[scale = 0.5]{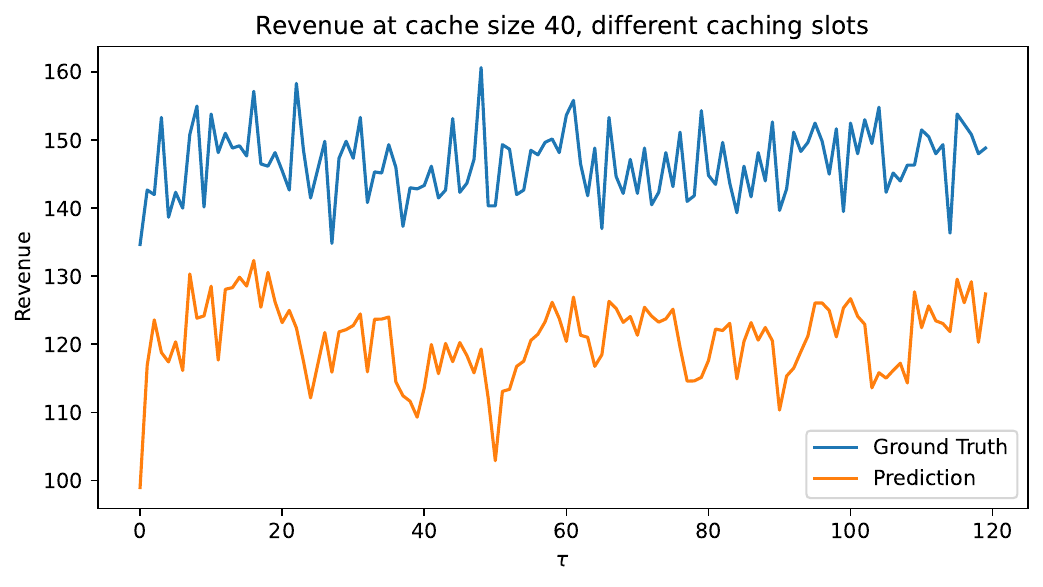}
\caption{System revenue at different cache placement slots}
\label{fig:taus}
\end{figure}

\subsection{Simulation Assumptions and Parameters} 
Since to the best of our knowledge there is no publicly available real world dataset with spatial and temporal user request information, we consider the following synthetic content request model\footnote{Other request models can also be easily incorporated into our proposed solution} to generate users' datasets. 

\subsubsection{Content Request Model} 
We assume that the user's request model follows a day-by-day pattern. 
There are two request stages each day: (a) a random requests initialization stage and (b) a subsequent requests generation stage, in which the following requests are made based on the previous information. 
We assume that each user selects a genre $\mathfrak{g}$ based on their genre preference $p_{u,\mathfrak{g}}$ at the beginning of a day and requests content from only the selected genre that whole day. 
During the first stage, the user will randomly request $L$ distinct content based on content popularity $P_{\mathfrak{g},f}$, similarity\footnote{We consider each content's distinctive feature set that can be used to calculate the cosine similarity of the content within the same genre.} and time-dependent features. Specifically, similarity is jointly considered with content popularity to compute a refined popularity score, which guides the random selection of content.
Then, we generate the following $M$ requests at the second stage based on these past $L$ requests. 
Denote the initial $L$ requested content by $\mathcal{F}_{u,L} = \{ f_{0}, f_{1},...,f_{l},...,f_{L-1}\}$ and the next $M$ requested content by $\mathcal{F}_{u,M}= \{ f''_0, f''_1,...,f''_m,...,f''_{M-1}\}$. 
Let us also denote the feature set of the $f_l^{\mathrm{th}}$ content by $\pmb{\phi}_{f_l}$. 
We then consider time dependency when generating the next requests by introducing a forgetting factor $w_l$ that makes the latest requests contribute more than the past requests. 
More specifically, we define  $w_l$ as
\begin{equation}
w_l \coloneqq e^{-\frac{L-l+1}{b}}, \ l=0,1,...,L-1
\end{equation}
where $b$ is a constant hyper-parameter.

Now, we find the most similar files to the previous $L$ files.
Particularly, we incorporate the effect of temporally weighted historical requests to compute a similarity score as follows
\begin{equation}
S_{u,f''} = \sum \limits_{l = 0}^{L-1} w_l \cdot \frac{\pmb{\phi}_{f_l} \cdot \pmb{\phi}_{f''} }{\Vert \pmb{\phi}_{f_l} \Vert \ \Vert \pmb{\phi}_{f''}\Vert}
\end{equation}
where $f''$ denote to be requested content and $f'' \in \mathcal{F} \setminus \mathcal{F}_{u,L}$. 

Then, we compute a new score based on similarity and content popularity inside genre as
\begin{equation}
\label{eq:score}
\bar{S}_{u,f''} = \lambda \cdot \frac{\mathrm{exp}(S_{u,f''})}{\sum\limits_{f'' \in \mathcal{F} \setminus \mathcal{F}_{u,L}} \mathrm{exp}(S_{u,f''})} + (1-\lambda) \cdot \frac{\mathrm{exp}(P_{\mathfrak{g},f''})}{\sum\limits_{f'' \in \mathcal{F} \setminus \mathcal{F}_{u,L}} \mathrm{exp}(P_{\mathfrak{g},f''})}
\end{equation}
where $\lambda \in (0,1)$ is a weighting factor. 
Then, we assume the user selects the Top-$M$ files based on $\bar{S}_{u,f''}$ as the content requests for the next $M$ slots. 
This process is repeated until the user makes $Q$ requests per day. 
Similarly, the process is repeated for $E$ days.
Algorithm \ref{algo2} summarized the content request model.

\begin{algorithm}[!t]
\caption{User Request Model}
\label{algo2}
\begin{algorithmic}[1]
\STATE \textbf{Input:} Total number of days $E$, user request per day $Q$, previous number of requests $L$, following number of requests $M$, genre preference $p_{u,\mathfrak{g}}$, content popularity $P_{\mathfrak{g},f}$
\FOR{$u=0, 1,2,..,U-1$}
    \FOR{$\epsilon = 0, 1,2,...,E-1$}
        \STATE Select a genre $\mathfrak{g}$ based on genre preference $p_{u,\mathfrak{g}}$ of user $u$, the following content selection is within genre $\mathfrak{g}$
        \FOR{$\mu=\epsilon Q+1,\epsilon Q+2,...,(\epsilon +1) Q$}
            \STATE $\mu' \gets \mu-\epsilon Q$
            \IF{$\mu'\leq L$}
                \STATE $\ell \gets \mu'$
                \STATE Select content randomly based on popularity $P_{\mathfrak{g},f}$, similarity and time-dependent features
            \ELSE
                \IF{$\mu'= L + iM+1, \ i = 0,1,2,\dots$}
                    \STATE Select previously requested content from $\mu'-L$ to $\mu'-1$ as $\mathcal{F}_{u,L}$
                    \STATE Compute score $\bar{S}_{u,f''}$ of to be requested content $f'' \in \mathcal{F} \setminus \mathcal{F}_{u,L}$ using (\ref{eq:score})
                    \STATE Select Top-$M$ files based on $\bar{S}_{u,f''}$ as $\mathcal{F}_{u,M}$
                \ENDIF
            \ENDIF
        \ENDFOR
    \ENDFOR
\ENDFOR
\STATE \textbf{Output:} Total content requests
\end{algorithmic}
\end{algorithm}

\subsubsection{Simulation Parameters} 
For our simulation, we consider $U=50$ users and $F=240$ files\footnote{Due to computational resource constraints, we consider a relatively smaller number of users and files than those typically encountered in real-world scenarios (e.g., Netflix, which hosts millions of video contents). However, the underlying concept of revenue optimization remains unchanged.}.
Besides, we assume that each file has a size $B = 1$ and belongs to one of the $G=3$ genres.
Content popularity in genre $P_{\mathfrak{g},f}$ follows a Zipf distribution with exponent $\tilde{\gamma}=1.2$. 
For the genre preference $p_{u,\mathfrak{g}}$ we use a symmetric Dirichlet distribution, $\mathrm{Dir}(\alpha_u)$, where $\alpha_u = 0.3$ is the concentration parameter. 
Furthermore, $L=7$, $M=5$, $b=0.5$, and $\lambda=0.5$ are used to generate the content requests.  
We generate $Q = 107$ requests daily and use $80$ days per user\footnote{While larger values could improve accuracy, these settings offer a practical trade-off between model performance and computational efficiency under limited resources.} as training data.

\subsubsection{Transformer training setting}

Let $t_m$ denote the first mini-slot in the label of the $m^{\mathrm{th}}$ training sample. 
The corresponding feature and label for this sample are given by $\mathbf{x}_u^m = (\mathbf{I}_{u}^{t_m-N}, \mathbf{I}_{u}^{t_m-N+1}, \cdots, \mathbf{I}_{u}^{t_m-1}) \in \mathbb{R}^{N\times F}$ and $\mathbf{y}_u^m = (\mathbf{I}_{u}^{t_m}, \mathbf{I}_{u}^{t_m+1}, \cdots, \mathbf{I}_{u}^{t_m+nK-1}) \in \mathbb{R}^{nK\times F}$, respectively. Each user generates training data using a sliding window approach with a step size of $n$. 

We use a vanilla Transformer \cite{attentionAllyouNeed} as our \ac{ml} model. 
In our implementation, the Transformer has $6$ encoder/decoder layers with $2$ heads. 
The input data will first go through an embedding layer of dimension $512$. 
The feed-forward \cite{attentionAllyouNeed} dimension is $1024$.
Besides, we use $\kappa=5$, $r_g=500$ global rounds, the cross-entropy loss, and SGD optimizer with a learning rate of $0.15$ for the \ac{fedavg} \cite{McMahanFL} algorithm. 
Moreover, $\beta= 3$, $c_{\mathrm{bs-ue}} = 0.5$, $c_{\mathrm{cl-bs}} = 2$, $c_{\mathrm{plc}} = 1.5, \gamma = 0.8$, $S \in [0,240]$, $n=2$,\bblue{and $K=5$.
The prediction accuracies in $n \times K = 5 \times 2=10$ mini-slots are shown in Table \ref{tab:predAccWtransformer}, which shows that the accuracy drops significantly as we predict the request farthest from the current slot.
Since the prediction accuracy at the last mini-slot of our chosen $\tau+K-1$ is so low that $d^{\tau+K-1}$ is dominated by the global popularity distribution, the revenue for times $\ge \tau+K$ does not play a significant role.
Therefore,we set $\tilde{K}=0$ for simplicity.}

\begin{table*}[t]
\centering
\caption{Prediction Accuracy in Different Mini-Slots with Vanilla Transformer \cite{attentionAllyouNeed} (in $3$ Independent Trials)}
\fontsize{6}{8}\selectfont
\begin{tabular}{|c|c|c|c|c|c|c|c|c|c|} \hline
\textbf{Slot $0$} &
\textbf{Slot $1$} & \textbf{Slot $2$} & \textbf{Slot $3$} &
\textbf{Slot $4$} &
\textbf{Slot $5$} &
\textbf{Slot $6$} &
\textbf{Slot $7$} &
\textbf{Slot $8$} &
\textbf{Slot $9$}\\ \hline
$0.8323 \pm 0.0091$ &  $0.8055 \pm 0.0122$ & $0.7914 \pm 0.0155$ & $0.7731 \pm 0.0175$ & $0.7568 \pm 0.0147$ & $0.7450 \pm 0.0149$ & $0.7352 \pm 0.0156$ & $0.7221 \pm 0.0137$ & $0.7031 \pm 0.0164$ & $0.6740 \pm 0.0198$ \\ 
 \hline
\end{tabular}
\label{tab:predAccWtransformer}
\end{table*}

\subsection{Performance Analysis and Comparisons}
In this work, we use the widely popular \emph{Transformer} \cite{attentionAllyouNeed} as our \ac{ml} model since it yielded superior performance over other popular models like \ac{lstm} and \ac{gru} in our simulation. 
\begin{figure}[t]
\centering
\includegraphics[scale = 0.4]{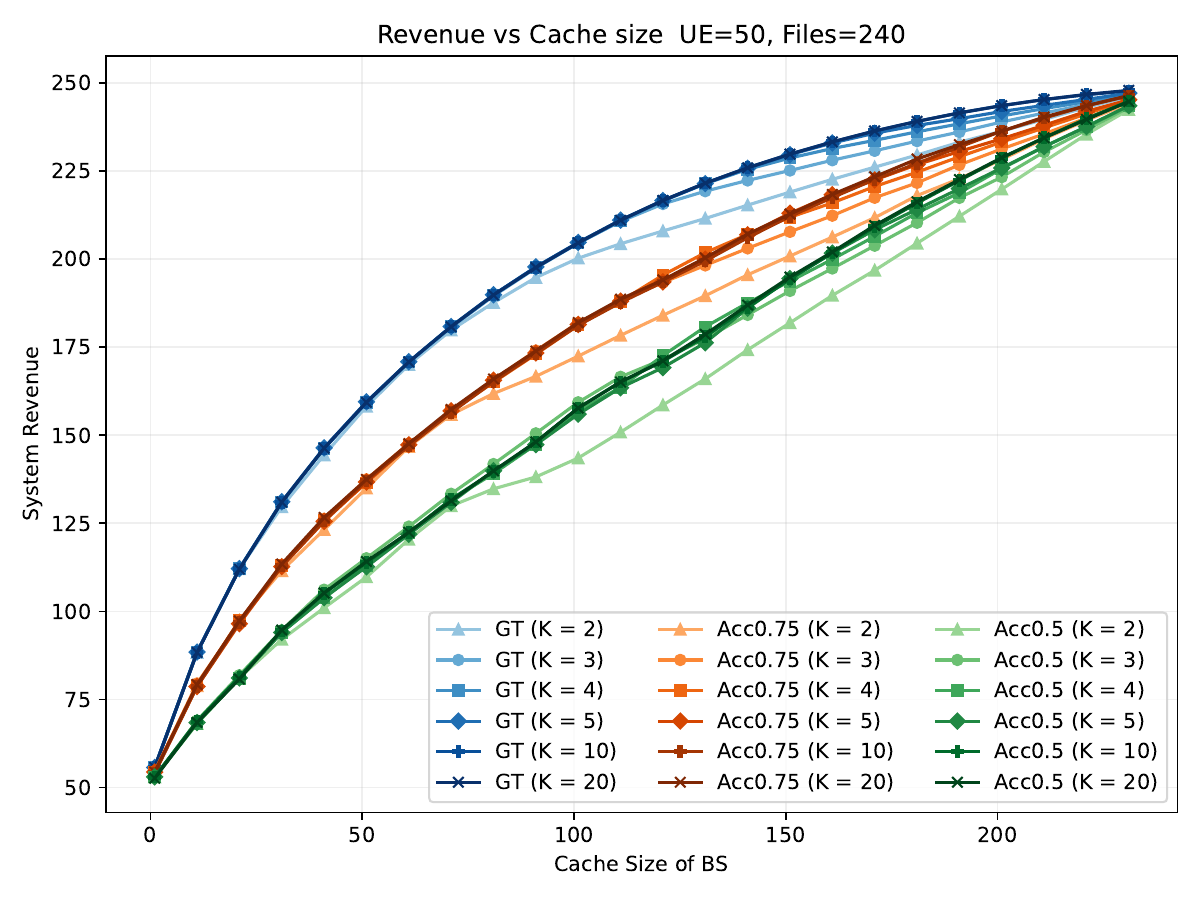}
\caption{Revenue comparison for different number of caching slots under different prediction quality}
\label{fig:KcompareRev}
\end{figure}

\subsubsection{Performance Analysis}
Now, we want to investigate the impact of different parameters on revenue and find the optimum ones. 
From (\ref{eq:rev_function}), it is clear that the parameters that affect system revenue are number of future slots $K$, discount factor $\gamma$, and cache placement cost $c_{\mathrm{plc}}$ (since $c_{\mathrm{plc}}$ and $c_{\mathrm{cl-bs}}$ jointly affect revenue, we can fix one and vary the other). 
The parameter comparison should consider different prediction qualities since the prediction accuracy also affects the system revenue.

Since we consider multi-slot revenue optimization, how many slots we want to look at in the future is essential.
Intuitively, a longer window, i.e., a large value of $K$, will yield better system revenue. 
This is because larger $K$ allows us to consider keeping files that remain popular over a longer period, reducing the relative cache replacement cost.

However, a large value of $K$ can be computationally challenging since we need to repeatedly solve a larger-dimensional multi-stage Knapsack problem.
Therefore, we want to find a proper $K$ that is sufficient to find quasi-optimal cache decisions. 
We compare the effect of different $K$ for different prediction qualities\footnote{For this comparison we use a "genie-plus-error" model, in which ground-truth requests are randomly replaced with erroneous requests to simulate different prediction accuracies.} case for parameter selection, shown as Fig. \ref{fig:KcompareRev}. 
As expected, the results show that the revenue increases as $K$ increases.
Besides, when prediction accuracy decreases, the revenue also decreases since prediction errors mislead the \ac{bs} to store files that will not be requested in the future. 
The results show that the revenue improvement is less noticeable beyond $K=5$ future slots.
Therefore, we consider $K=5$ for the rest of the simulation results.

We then want to find the optimal setting for $\gamma$ and $c_{\mathrm{plc}}$.
When we reduce the decaying factor $\gamma$, future requests will contribute less to the current decision. 
In other words, a small $\gamma$ leads to a myopic view that puts less weight on the farthest future requests, 
leading to the possibility of more content exchange in the future.
Similarly, as $c_{\mathrm{plc}}$ decreases, storing new files in the cache \bblue{becomes cheaper}. The system tends to cache more content that fulfills the current requests, as bringing new content to the storage is cheaper. Thus, we expect reducing $\gamma$ and $c_{\mathrm{plc}}$ to result in similar outcomes; both cases lead to decreased revenue - though note that $c_{\mathrm{plc}}$ is a system parameter reflecting the cost of cache placement, while $\gamma$ is a hyperparameter of the algorithm. 
However, when we increase the cache size, the impact of reduced $\gamma$ and $c_{\mathrm{plc}}$ may not be obvious since the \ac{es} has more room to prefetch more contents that will be requested in the future. 
We compare the effect of different $\gamma$ and $c_{\mathrm{plc}}$ in different prediction quality cases for parameters in Fig. \ref{fig:GammacompareRev} and Fig. \ref{fig:c_clbs_compareRev}.
These results follow the above discussions. 
Moreover, the revenue decreases as prediction accuracy decreases. Specifically, we observe that when $\gamma>0.7$ and $c_{\mathrm{plc}}>1.2$, the revenue is close to the optimal revenue.
Thus, we set $\gamma=0.8$ and $c_{\mathrm{plc}}=1.5$ for our simulations.

\begin{figure}
\centering
\includegraphics[scale = 0.4]{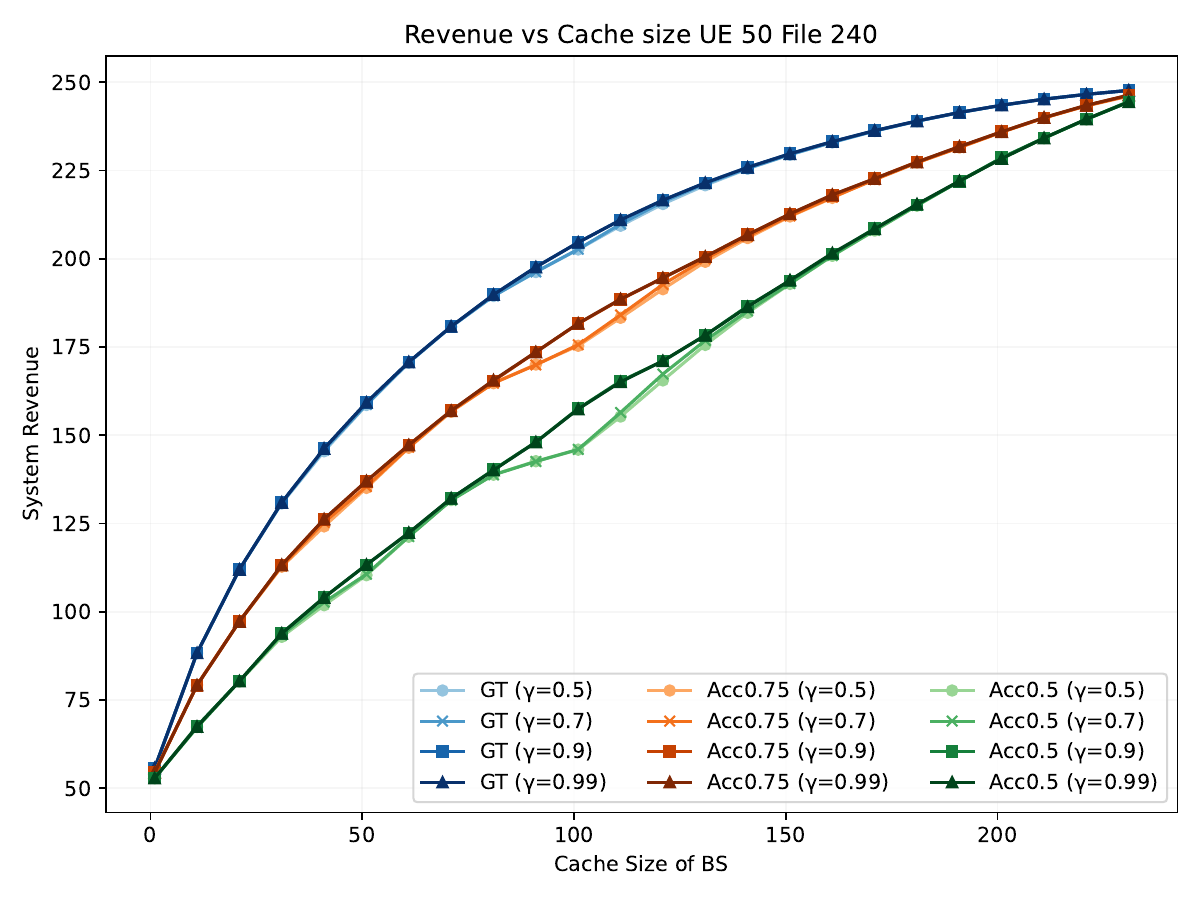}
\caption{Revenue comparison for different $\gamma$ under different prediction quality, $10$ caching slots}
\label{fig:GammacompareRev}
\end{figure}

\begin{figure}
\centering
\includegraphics[scale = 0.4]{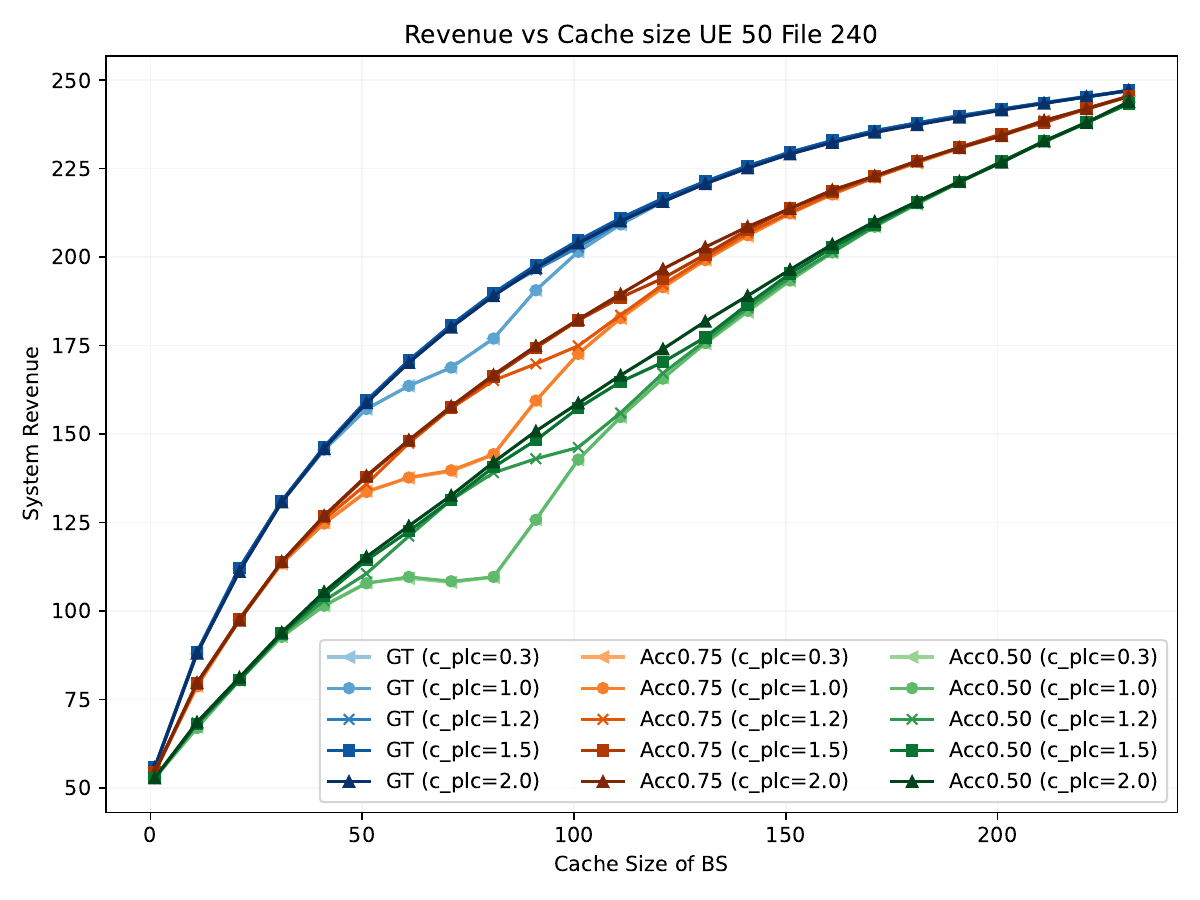}
\caption{Revenue comparison for different $c_{\mathrm{plc}}$ under different prediction quality, $5$ caching slots}
\label{fig:c_clbs_compareRev}
\end{figure}

\subsubsection{Baseline Comparisons} 
To evaluate the performance of our proposed method, we consider several baselines. 

\begin{itemize}
\item Centralized SGD (\textbf{C-SGD}): This baseline assumes a \emph{Genie} has access to all users' datasets on which the \ac{ml} model $\Theta$ is trained. 
Upon getting the trained model, we use it to predict user requests $\hat{i}_{u,f}^t$. The subsequent procedures follow those outlined in Sections \ref{sec:request_estimation} to \ref{sec:caching_proc}. 

\item One slot revenue optimization caching (\textbf{One-slot}) \cite{zhang2025revenue}: This baseline, which we developed in our conference version \cite{zhang2025revenue}, only predicts one placement slot into the future and assumes further cache decisions are fixed based on global popularity.
\bblue{Specifically, we train a Transformer to predict the content demands only in the next $n$ mini-slots using the strategy described in \cite{zhang2025revenue}.}
We assume the caching decisions after the next slot, say $\{\{d_f^{\tau+k}\}_{f=0}^{F-1}\}_{k=1}^{K-1}$, are fixed and determined based on global content popularity $G_f$\footnote{We assume the \ac{csp} knows global content popularity information and makes it available to facilitate caching policies.}. Based on these simplifications, the revenue objective function is represented as
\begin{align}
\label{eq:R_total_oneslot}
&\mathbb{E}[R_{\mathrm{one-slot}}^{\tau} ] 
\coloneqq \mathbb{E}[R_{\mathrm{benefit}}^{\tau} - R_{\mathrm{delivery\_cost}}^{\tau} \nonumber \\
&\quad \quad \quad - R_{\mathrm{placement\_cost}}^{\tau} + \gamma R_{\mathrm{placement\_benefit}}^{\tau+1}], \nonumber\\
&\quad = \underbrace{\sum\limits_{t=n\tau}^{n(\tau+1)-1}\sum\limits_{f=0}^{F-1} \sum\limits_{u=0}^{U-1} (\hat{i}_{u,f}^t a_{u,f}^t + g_{u,f} (1-a_{u,f}^t) )\cdot \beta}_{\mathbb{E}[R_{\mathrm{benefit}}^\tau]} \nonumber\\
&\quad-\underbrace{\sum\limits_{t=n\tau}^{n(\tau+1)-1}\sum\limits_{f=0}^{F-1} \sum\limits_{u=0}^{U-1} (\hat{i}_{u,f}^t a_{u,f}^t + g_{u,f} (1-a_{u,f}^t) )}_{\mathbb{E}[R_{\mathrm{delivery\_cost}}^\tau]} \nonumber
\\
&\quad \cdot \underbrace{(c_{\mathrm{bs-ue}} + (1-d_f^{\tau}) c_{\mathrm{cl-bs}})}_{\mathbb{E}[R_{\mathrm{delivery\_cost}}^\tau]}
\nonumber \\
&\quad - \underbrace{c_{\mathrm{plc}} \sum\limits_{f=0}^{F-1} d_f^{\tau} (1-d_f^{\tau-1})}_{\mathbb{E}[R_{\mathrm{placement\_cost}}^{\tau}]}  + 
\gamma \cdot 
\underbrace{c_{\mathrm{plc}} \sum\limits_{f=0}^{F-1} d_f^{\tau+1} d_f^{\tau}}_{\mathbb{E}[R_{\mathrm{placement\_benefit}}^{\tau+1}]}. 
\end{align}
Since cache decisions after $\tau$ are already known, the objective function only considers requests at $\tau$. 
The placement cost at $\tau+1$ still needs to be considered since there is still re-caching cost from $\tau$ to $\tau+1$. 
Therefore, we solve the following optimization problem for this baseline.
\begin{subequations}
\label{optproblem_singleslot}
\begin{align}
\underset{\{d_f^{\tau}\}_{f=0}^{F-1}} {\mathrm{maximize}} & \qquad  R_{\mathrm{one-slot}}^{\tau} \tag{\ref{optproblem_singleslot}} \\
\mathrm{subject ~to} &\qquad C1:\sum_{ f=0}^{F-1} d_f^{\tau} B \leq S, \\ 
&\qquad C2: d_f^{\tau}\in\{0,1\}, \quad \forall f \in[0,F-1],
\end{align}
\end{subequations}
This problem can be solved as 0-1 Knapsack problem. Since all the files are same size, we can use a greedy algorithm \cite[Chapter 2]{knapsackBook} to solve this problem. 

\item Simple estimation for revenue and cache placement (\textbf{SimpEst}): In this baseline, we rely entirely on Transformer's prediction results. 
More specifically, we estimate the actual content request as  
\begin{equation}
\label{simpleEstEqn}
i^{t}_{u,f,\mathrm{est}} \approx \mathrm{argmax} [\hat{\mathbf{I}}_{u}^t] \cdot a_{u,\mathrm{argmax} [\hat{\mathbf{I}}_{u}^t]}^t.
\end{equation}
We then substitute this estimation into (\ref{eq:expectedRevEqn0}) in place of $\mathbb{E}[i_{u,f}^t]$, forming a new expected revenue. After that, we follow the procedures outlined in Section \ref{sec:z_transformation} to \ref{sec:caching_proc} to obtain the optimal caching decisions.

\item Statistics-based cache placement (\textbf{Statistics}): Content stored at \ac{es}' cache based on historical global content popularity.

\item Least recently used (\textbf{LRU}): The \ac{es} tracks all users' historical requests up to the cache placement slot $\tau-1$.
Content requested during $[n\tau-n, n\tau]$ but absent from the cache will be stored, replacing the least recently used content.

\item Random content caching (\textbf{Random}): This naive baseline stores content randomly for each cache placement slot.
\end{itemize}

We now compare the performance of our proposed two-stage solution with the above baselines, in terms of \ac{chr} and system revenue for different cache sizes. 
As the cache size increases, we expect the CHR and the revenue to increase, which is trivial. 
Besides, the ground truth case is expected to provide the best  performance since perfect information about the future guides the cache placement. 
Given a fixed cache size, our proposed method is expected to be better than other baselines, as our algorithm utilizes the prediction and content popularity to estimate the future content requests. 
Furthermore, the C-SGD baseline is expected to perform better than the \ac{fl} version, since the model is trained on all users' datasets in a centralized fashion ignoring the privacy concerns.  
The One-slot caching method, on the other hand, only considers one future slot, resulting in lower performance than the proposed method.
The SimpEst baseline, on the other hand, decides cache placement only on the Transformer's prediction and, thus, is expected to suffer from the prediction inaccuracies, and thus perform worse as distant-future effects have increasing impact on the revenue. 
Moreover, the other naive baselines (e.g., LRU, Statistics, and Random) will also likely fail to capture the dynamic user demands.
More specifically, the LRU baseline only emphasizes the previous content requests, while future content requests can be significantly different. 
The statistical cache placement baseline only considers global popularities without considering user-specific preferences. 
Finally, the Random caching picks files randomly and does not utilize historical content requests at all.

Our simulation results in Figs. \ref{cachingReqSlots1} - \ref{cachingReqSlots2} follow the above analysis. 
As expected, if perfect future request information is available, multi-slot revenue optimization, i.e., the \emph{ground truth case} shown in (\ref{GT_optproblem_Transformed}), achieves the highest performance compared to other baselines
that do not have access to future knowledge.
Besides, the one-slot baseline with future knowledge, termed as \emph{One-Slot-GT}, also performs worse than the performance we get by optimizing (\ref{GT_optproblem_Transformed}).
Now, when we do not know the future exactly and rely on prediction results, the best baseline is C-SGD since it assumes all users' datasets are centrally available to train the model.
Still, our results suggest that the proposed two-stage \ac{fl} solution is nearly identical to the performance of C-SGD.
Besides, our proposed two-stage solution is even better than the One-Slot-GT case.
It is worth noting that both multi-slot and One-slot approaches (for both ground truth with known future and prediction-based cases) have similar performances when the cache size is small ($<20$). 
This is due to the fact that only the most popular content set in the near future dominates the caching decision with a small cache.
However, as cache size increases, the advantage of considering a longer future window becomes more evident.
In contrast, the heuristic baselines (\ac{lru}, statistical caching, and random) perform significantly worse than the prediction-based methods.
When the cache size is relatively small, the revenues of the \ac{lru} and random baselines decrease because recently requested content is unlikely to reappear soon based on our content request model.
Since \ac{lru} only uses historical information and tends to retain recently requested content, it results in frequent cache replacement when the cache size is small, incurring additional costs.
Random caching stores content arbitrarily, and as the cache size increases, the frequency of content exchanges also increases, leading to higher costs.

\hspace{\fill} \\
\subsubsection{\ac{chr} and Revenue Comparison}
We now investigate CHR and revenue for different user numbers in the system.
We consider user numbers ranging from $20$ to $50$, and comparing \ac{chr} and revenue for the case of known ground truth. The \acp{chr} are calculated based on the cache decisions obtained from Algorithm \ref{algo2}. Fig. \ref{placeholder3} shows 
that both \ac{chr} and revenue increase as the cache size increases, which is expected. 
For a small number of \acp{ue}, the \ac{chr} quickly reaches its maximum value, which is also similar for the system revenue.
This is due to the fact that when the number of requests in each cache placement slot is not too large, a small cache size is sufficient to meet the demands.
As the number of \acp{ue} increases, the total achievable revenue also increases, whereas \ac{chr} does not follow the same trend: 
the \ac{chr} reaches $1$, i.e., the maximum value, while revenue has increasing trends. 
This is because file exchanges still occur between different cache placement slots, even when the cached content meets users' demands within a given caching slot.
Therefore, the revenue is a better reflection of the overall system performance, providing more detailed insights into how many user requests are satisfied and the costs incurred for file reloading. 
The \ac{chr}, on the other hand, is less effective in capturing these details.

\begin{figure}
\centering
\includegraphics[scale = 0.45]{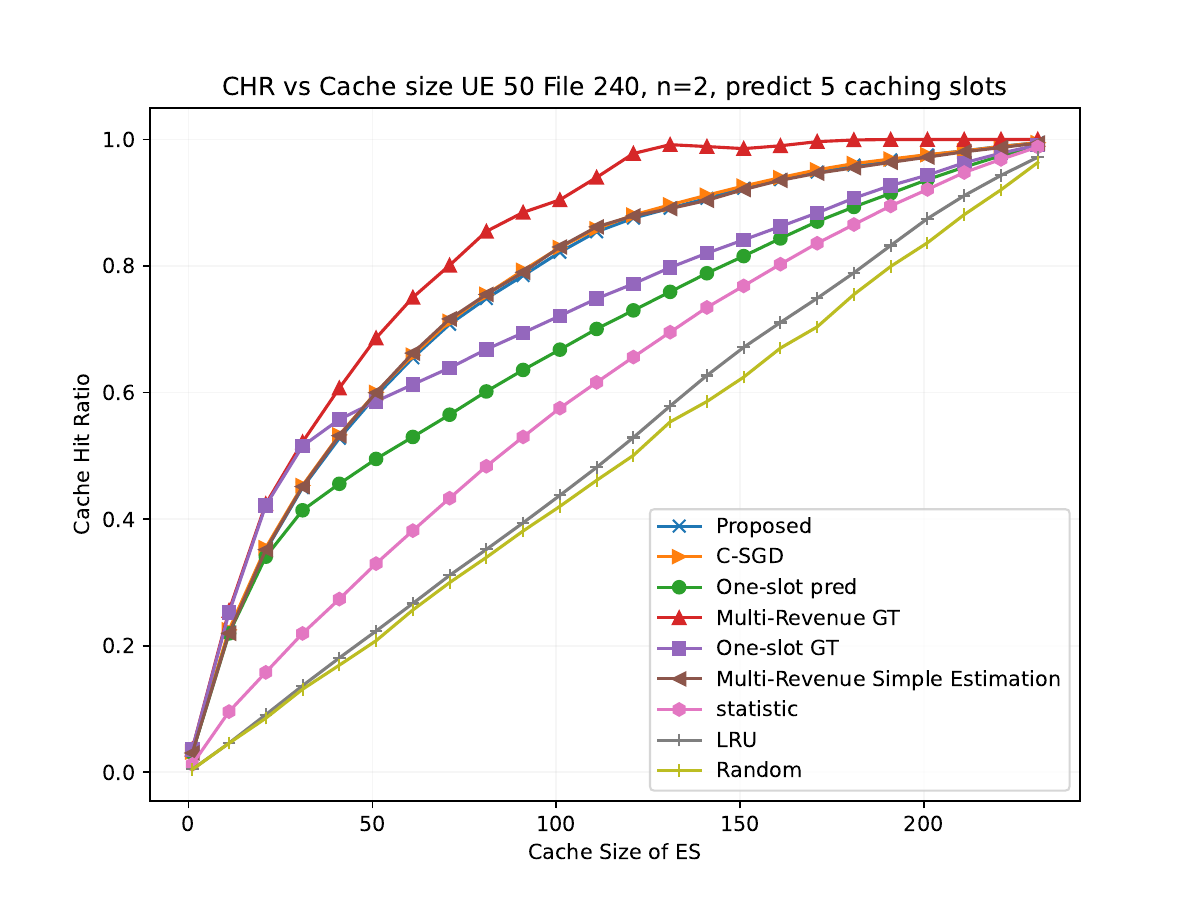}
\caption{\Ac{chr} comparison for different cache placement methods}
\label{cachingReqSlots1}
\end{figure}

\begin{figure}
\centering
\includegraphics[scale = 0.45]{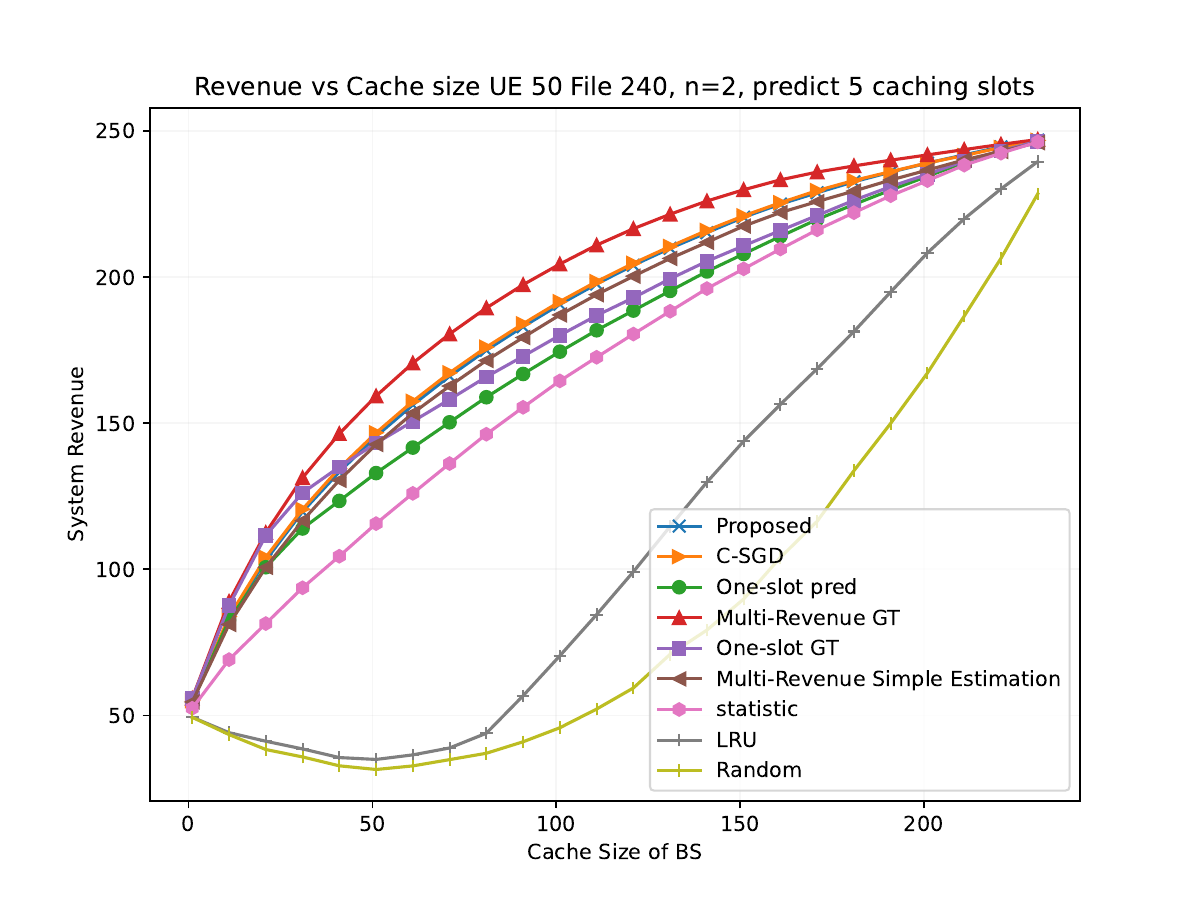}
\caption{Revenue comparison for different cache placement methods}
\label{cachingReqSlots2}
\end{figure}



\begin{figure}
\centering
\includegraphics[scale = 0.4]{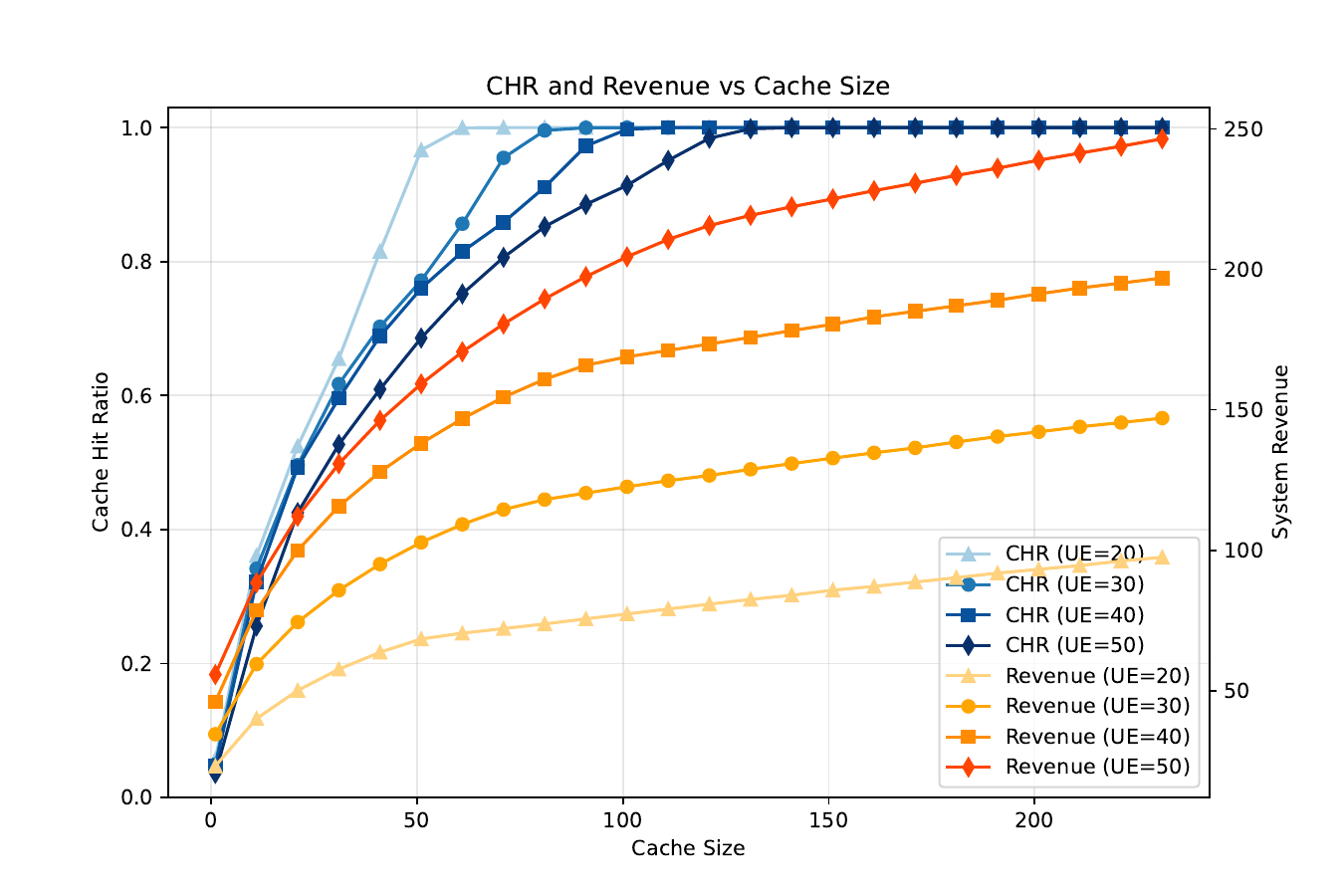}
\caption{CHR and revenue comparison for different number of UEs}
\label{placeholder3}
\end{figure}

\section{Conclusions}
\label{sec:conclusion}
This paper proposed a two-stage solution for wireless video caching networks. 
At first, the multi-step content demands were predicted using a privacy-preserving FL-trained Transformer model.
Then the prediction results were used to estimate users' actual demands, which were then used for a long-term revenue optimization to determine the caching decisions.
Since the original problem is a multi-stage Knapsack problem with an objective function containing multiplications of binary variables, we transformed it by introducing a new variable, which enabled us to use an existing \ac{ilp} solver to solve it efficiently.
The numerical results show our proposed caching method's superiority over other counterparts.
Moreover, our finding suggests that long-term revenue is a more accurate indicator than the widely used \ac{chr} for describing the performance of a caching policy in practical wireless video caching networks.

\bibliographystyle{ieeetr} 
\bibliography{refs} 

\end{document}